\newcommand{\be}{\begin{equation}}
\newcommand{\en}{\end{equation}}
\newcommand{\mc}{\mathcal}
\newcommand{\mb}{\mathbb}
\newcommand{\D}{\mc D}
\newcommand{\Sys}{{\cal S}}
\newcommand{\Hil}{\mc H}
\newcommand{\A}{\mc A}
\newcommand{\Ao}{{{\mc A}_0}}
\newcommand{\Bo}{{{\mc B}_0}}
\newcommand{\LL}{\mc L}
\newcommand{\LD}{{\LL}^\dagger (\D)}
\newcommand{\1}{1\!\!\!\!1}
\newcommand{\LDD}{{\LL} (\D_\pi,\D_\pi')}
\newcommand{\LDH}{{\LL}^\dagger (\D,\Hil)}
\newtheorem{defn}{Definition}[section]
\newtheorem{prop}[defn]{Proposition}
\newtheorem{thm}[defn]{Theorem}
\newtheorem{lemma}[defn]{Lemma}
\newtheorem{cor}[defn]{Corollary}
\newenvironment{proof}{\noindent {\bf Proof --}}{\hfill$\square$ \vspace{3mm}\endtrivlist}
\begin{document}

\thispagestyle{empty}

\vspace*{0.7cm}

\begin{center}
{\Large \bf Exponentiating derivations of quasi *-algebras:
\\possible approaches and applications}

\vspace{1.5cm}

{\large F. Bagarello }
\vspace{3mm}\\
 Dipartimento di Metodi e Modelli Matematici \\
Facolt\`a d'Ingegneria - Universit\`a di Palermo \\ Viale delle
Scienze,       \baselineskip15pt
     I-90128 - Palermo - Italy, \vspace{2mm}\\

\vspace{5mm}

{\large A. Inoue}
\vspace{3mm}\\
Department of Applied Mathematics,\\ Fukuoka University,\\
J-814-80 Fukuoka, Japan

\vspace{3mm} and

\vspace{3mm} {\large C.Trapani}
\vspace{3mm}\\
 Dipartimento di Matematica e Applicazioni \\
Universit\`a di Palermo\\Via Archirafi 34, \baselineskip15pt
I-90123 - Palermo - Italy\vspace{2mm}

\end{center}

\vspace*{12mm}

 \noindent{\sc Abstract} \\\noindent {The problem of exponentiating derivations
 of quasi *-algebras is considered in view of applying it to the determination of
  the time evolution of a physical system. The particular case where
 observables constitute a proper CQ*-algebra is analyzed.}

\vspace{3cm} \noindent{\em 2000 Mathematics Subject
Classification}: 47L60; 47L90.
 \vfill

\newpage

\section{Introduction}
 The unbounded nature of the operators describing
observables of a quantum mechanical system with a finite or an
infinite number of degrees of freedom is mathematically {\em a
fact} which follows directly from the non commutative nature of
the quantum world in the sense that, as a consequence of the
Wiener-von Neumann theorem, the commutation relation $[\hat q,\hat
p]=i\1$ for the position $\hat q$ and the momentum $\hat p$ is not
compatible with the boundedness of both $\hat q$ and $\hat p$.
Thus any operator representation of this commutation relation
necessarily involves unbounded operators. Also the bosonic
creation and annihilation operators $a^\dagger$ and $a$,
$[a,a^\dagger]=\1$, or  the hamiltonian of the simple harmonic
oscillator, $H=\frac{1}{2}(\hat p^2+\hat q^2)=a^\dagger
a+\frac{1}{2}\1$, just to mention few examples, are all unbounded
operators.

However, when an experiment is carried out, what is measured is an
eigenvalue of an observable, which is surely a \underline{finite}
real number: for instance, if the physical system $\Sys$ on which
 measurements are performed is in a laboratory, then if we measure
the position of a particle of $\Sys$ we must get a finite number
as a result. Also, if we measure the energy of a quantum particle
in a, say, harmonic potential, we can only get a finite measure
since the probability that the particle has infinite energy is
zero. Moreover, in a true relativistic world, since the velocity
of a particle cannot exceed the velocity of  light $c$, any
measurement of its momentum can only give, again, a finite result.
>From the mathematical point of view this may correspond to
restricting the operator to some {\it spectral subspaces} where
the unboundedness is in fact removed. This procedure supports the
practical point of view  where it seems enough to deal, from the
very beginning, with bounded operators only.

It is then reasonable to look for a compromise within these
opposite approaches and the compromise could be the following:
given a system $\Sys$ we consider a slightly modified version of
it in which all the operators related to $\Sys$ are replaced by
their {\em regularized} version (e.g. their {\em finite-volume}
version, natural choice if $\Sys$ {\em lives} in a laboratory!),
obtained by means of some given cutoff, we compute all those
quantities which are relevant for our purposes and then, in order
to check whether this procedure has modified the original physical
nature of $\Sys$, we try to see whether these results are {\em
stable} under the removal of the cutoff. As an example, if $\Sys$
is contained inside a box of volume $V$, we do expect that all the
results become independent of the volume cutoff as soon as the
value of this cutoff, $W$, becomes larger than $V$, since what
happens outside the box has almost no role in the behavior of
$\Sys$.

As it is extensively discussed in \cite{sew}, the full description
of a physical system $\Sys$ implies the knowledge of three basic
ingredients: the set of the observables, the set of the states
and, finally, the dynamics that describes the time evolution of
the system by means of the time dependence of the expectation
value of a given observable on a given state. Originally the set
of the observables was considered to be a C*-algebra, \cite{haag}.
In many applications, however, this was shown not to be the most
convenient choice and the C*-algebra was replaced by a von Neumann
algebra, because the role of the representations turns out to be
crucial mainly when long range interactions are involved, see
\cite{bm} and references therein. Here we use a different
algebraic structure, similar to the one considered in \cite{eka},
which is suggested by the considerations above: because of the
relevance of the unbounded operators in the description of $\Sys$,
we will assume is Sections 2 and 3 that the observables of the
system belong to a quasi *-algebra $(\A,\Ao)$, see \cite{ctrev}
and references therein, while,  in order to have a richer
mathematical structure, in Section 4 we will use a slightly
different algebraic structure: $(\A,\Ao)$ will be assumed to be a
proper CQ*-algebra, which has nicer topological properties. In
particular, for instance, $\Ao$ is a C*-algebra. The set of states
over $(\A,\Ao)$, $\Sigma$, is described again in \cite{ctrev},
while the dynamics is usually a group (or a semigroup) of
automorphisms of the algebra, $\alpha^t$. Therefore, following
\cite{sew}, we simply put $\Sys=\{(\A,\Ao),\Sigma,\alpha^t\}$.

The system $\Sys$ is now {\em regularized}: we introduce some
cutoff $L$, (e.g. a volume or an occupation number cutoff),
belonging to a certain set $\Lambda$, so that $\Sys$ is replaced
by a sequence or, more generally, a net of systems $\Sys_L$, one
for each value of $L\in\Lambda$. This cutoff is chosen in such a
way that all the observables of $\Sys_L$ belong to a certain
*-algebra $\A_L$ contained in $\Ao$: $\A_L\subset\Ao\subset\A$. As
for the states, we choose $\Sigma_L=\Sigma$, that is, the set of
states over $\A_L$ is taken to coincide with the set of states
over $\A$. This is a common choice, \cite{bm}, even if also
different possibilities are considered in literature. For
instance, in \cite{sewbag}, also the states  depend on $L$.
Finally, since the dynamics is related to a hamiltonian operator
$H$ (or to the Lindblad generator of a semigroup), and since $H$
has to be replaced with $H_L$, because of the cutoff, $\alpha^t$
is replaced by the family $\alpha_L^t(\cdot)=e^{iH_Lt}\cdot
e^{-iH_Lt}$. Therefore
$$
\Sys=\{(\A,\Ao),\Sigma,\alpha^t\}\longrightarrow\{\Sys_L=\{\A_L,\Sigma,\alpha_L^t\},L\in\Lambda\}.
$$

 \section{The mathematical framework}

Let $\A$ be a linear space and $\Ao$ a  $^\ast$-algebra contained
in $\A$ as a subspace. We say that $\A$ is a quasi $^\ast$-algebra
over $\Ao$ if (i) the right and left multiplications of an element
of $\A$ and an element of $\Ao$ are always defined and linear;
(ii) $x_1 (x_2 a)= (x_1x_2 )a, (ax_1)x_2= a(x_1 x_2)$ and $x_1(a
 x_2)= (x_1 a) x_2$, for each $x_1, x_2 \in \A_0$ and $a \in \A$ and (iii) an
involution * (which extends the involution of $\Ao$) is defined in
$\A$ with the property $(ab) ^\ast =b ^\ast a ^\ast$ whenever the
multiplication is defined.

 In this paper we will always assume
that the quasi $^\ast$ -algebra under consideration has a unit,
i.e. an element $\1 \in \Ao$ such that $a\1 =\1 a=a, \;\, \forall
a\in \A$.

A quasi  $^\ast$ -algebra $(\A,\Ao)$ is said to be a locally
convex quasi
 $^\ast$-algebra if in $\A$ a locally convex topology $\tau$ is defined  such
that (a) the involution is continuous and the multiplications are
separately continuous; and (b) $\Ao$ is dense in $\cal A[\tau]$.
We indicate with $\{p_\alpha\}$ a directed set of seminorms which
defines $\tau$. { Throughout this paper, we will always suppose,
without loss of generality, that a locally convex quasi *-algebra
$(\A[\tau],\Ao)$ is {\it complete}.}

In the following we also need the concept of {\em
*-representation}.

Let $\D$ be a dense subspace in some Hilbert space $\Hil$. We
denote with $\LDH$ the set of all closable operators $X$ in $\Hil$
such that
 $D(X)  = \D$ and $D(X^*)  \supset  \D$ which is a {\em partial
 *-algebra}, \cite{book},
 with the usual operations
 $X+Y$, $\lambda X$, the involution $X^\dagger=X^*|\D$ and the weak product
 $X${\tiny$\Box$}$Y\equiv X^{\dagger *}Y$ whenever $Y\D\subset D(X^{\dagger *})$ and $X^\dagger\D\subset D(Y^{*})$.
We also denote with $\LD$ the *-algebra consisting of the elements
$A\in \LDH$ such that both $A$ and its adjoint $A^*$ map $\D$ into
itself (in this case, the weak multiplication reduces to the
ordinary multiplication of operators).

\smallskip Let $(\A,\Ao)$ be a quasi *-algebra, $\D_\pi$ a dense
domain in a certain Hilbert space $\Hil_\pi$, and $\pi$ a linear
map from $\A$ into $\LL^\dagger(\D_\pi, \Hil_\pi)$ such that:

(i) $\pi(a^*)=\pi(a)^\dagger, \quad \forall a\in \A$;

(ii) if $a\in \A$, $x\in \Ao$, then $\pi(a)${\tiny$\Box$}\!\!
$\pi(x)$ is well defined and $\pi(ax)=\pi(a)${\tiny$\Box$}\!\!
$\pi(x)$.

We say that such a map $\pi$ is a {\em *-representation of $\A$}.
Moreover, if

(iii) $\pi(\Ao)\subset \LL^\dagger(\D_\pi)$,

then $\pi$ is said to be a {\em *-representation of the quasi
*-algebra} $(\A,\Ao)$.

{ The *-representation $\pi$ is called {\it ultra-cyclic} if there
exists $\xi_0 \in \D_\pi$ such that $\pi(\Ao)\xi_0=\D_\pi$.}

  Let $\pi$ be a *-representation of $\A$.
The strong topology $\tau_s$ on $\pi(\A)$ is the locally convex
topology defined by the following family of seminorms:
$\{p_\xi(.); \; \xi\in\D_\pi\}$, where
$p_\xi(\pi(a))\equiv\|\pi(a)\xi\|$, where $a\in \A$, $\xi\in
\D_\pi$.

For an overview on partial *-algebras and related topics we refer
to \cite{book}.

\begin{defn}
Let $(\A,\Ao)$ be a quasi *-algebra. A {\em *-derivation of}
$\Ao$ is a map $\delta: \Ao\rightarrow \A$
 with the following properties:
\begin{itemize}
\item[(i)]  $\delta(x^*)=\delta(x)^*, \; \forall x \in \Ao$;
\item[(ii)] $\delta(\alpha x+\beta y) = \alpha \delta(
x)+\beta\delta( y), \; \forall x,y \in \Ao, \forall \alpha,\beta
\in \mathbb{C}$; \item [(iii)] $\delta(xy) = x\delta( y)+\delta(
x)y,  \; \forall x,y \in \Ao$.
\end{itemize}
\label{Definition 3.1}
\end{defn}

Let $(\A,\Ao)$ be a quasi *-algebra and $\delta$ be a *-derivation
of $\Ao$. Let $\pi$ be a *-representation of  $(\A,\Ao)$. We will
always assume that whenever $x\in \Ao$ is such that $\pi(x)=0$,
then $\pi(\delta(x))=0$. Under this assumption, the linear map
\begin{equation}
\delta_\pi(\pi(x))=\pi(\delta(x)), \quad x\in \Ao, \label{41}
\end{equation}
is well-defined on $\pi(\Ao)$ with values in $\pi(\A)$ and it is a
*-derivation of $\pi(\Ao)$. We call $\delta_\pi$ the *-derivation
{\em induced} by $\pi$.

Given such a representation $\pi$ and its dense domain $\D_\pi$,
we consider the usual graph topology $t_\dagger$ generated by the
seminorms

\begin{equation}
\xi\in\D_\pi \rightarrow \|A\xi\|, \quad A\in \LL^\dagger(\D_\pi).
\label{42}
\end{equation}

{ If $\D_\pi'$ denotes the conjugate dual space of $\D_\pi$, we
get the usual rigged Hilbert space $\D_\pi[t_\dagger] \subset
\Hil_\pi \subset \D_\pi'[t_\dagger']$, where $t_\dagger'$ denotes
the strong dual topology of $\D_\pi'$.  Let $\LDD$ denote the
space of all continuous linear maps from $\D_\pi[t_\dagger]$ into
$\D_\pi'[t_\dagger']$. Then one has
$$
\LL^\dagger(\D_\pi) \;  \subset \; \LDD.$$} Each operator $A\in
\LL^\dagger(\D_\pi)$ can be extended to an operator $\hat A$ on
the whole $\D_\pi'$ in the following way:
\begin{equation}\label{extens}
<\hat A\xi',\eta>=<\xi',A^\dagger \eta>, \quad \forall \xi'\in
\D_\pi', \quad \eta\in \D_\pi.
\end{equation}
Therefore the left and right multiplication of
$X\in\LL(\D_\pi,\D_\pi')$ and $A\in\LL^\dagger(\D_\pi)$ can always
be defined:
$$
(X\circ A)\xi=X(A\xi), \mbox{ and } (A\circ X)\xi=\hat A(X\xi),
\quad \forall \xi\in \D_\pi.
$$

\medskip With these definitions it is known that
$(\LL(\D_\pi,\D_\pi'),\LL^\dagger(\D_\pi))$ is a quasi *-algebra.

{Let $\delta$ be a *-derivation of $\Ao$ and $\pi$ a
*-representation of $(\A,\Ao)$. Then
$\pi(\Ao)\subset\LL^\dagger(\D_\pi)$. We say that the *-derivation
$\delta_\pi$ induced by  $\pi$  is {\em spatial} if there exists
$H=H^\dagger\in \LL(\D_\pi,\D_\pi')$ such that
\begin{equation}
\delta_\pi(\pi(x))=i\{H\circ\pi(x)-\widehat{\pi(x)}\circ H\},\quad
\forall x\in\Ao. \label{43}
\end{equation}
where $\widehat{\pi(x)}$ denotes the extension of $\pi(x)$ defined
as in \eqref{extens} (from now on, whenever no confusion may
arise, we use the same notation for $\pi(x)$ and for its
extension).

\medskip
Let now $(\A,\Ao)$ be a locally convex quasi *-algebra with
topology $\tau$. Necessary and sufficient conditions for the
existence of a $(\tau-\tau_s)$-continuous, ultra-cyclic
*-representation $\pi$ of $\A$, with ultra-cyclic vector $\xi_0$,
such that the *-derivation $\delta_\pi$ induced by $\pi$  is
spatial have been given in \cite[Theorem 4.1]{bit}. We now suppose
that these conditions occur, so that there exists a ultra-cyclic
$(\tau-\tau_s)$-continuous *-representation $\pi$ of $\A$ in
Hilbert space $\Hil_\pi$,  with ultra-cyclic vector $\xi_0$.
Furhermore, we assume that a family of *-derivations (in the sense
of Definition \ref{Definition 3.1}) $\{\delta_n: \, n\in \mathbb
N\}$ of the
 *-algebra with identity $\Ao$ is given. As done in \cite{bit}, we consider
 the related family of *-derivations $\delta_\pi^{(n)}$
induced by $\pi$ defined on $\pi(\Ao)$ and with values in
$\pi(\A)$:
\begin{equation}
 \delta_\pi^{(n)}(\pi(x))=\pi(\delta_n(x)), \quad x\in\Ao.
\label{417}
\end{equation}
Suppose that each $\delta_\pi^{(n)}$ is spatial and let $H_n\in
\LDD$ be the corresponding implementing operator. Assume, in
addition, that
$$\sup_n\|H_n\xi_0\|=:L<\infty.$$ Then, as shown in \cite[Proposition 4.3]{bit},
 if $\{\delta_n(x)\}$
$\tau$-converges to $\delta (x)$, for every $x \in \Ao$, it turns
out that $\delta$ is a *-derivation of $\Ao$ and the *-derivation
$\delta_\pi$ induced by $\pi$ is well-defined and spatial. The
relation between $H_n$ and the operator $H$ implementing
$\delta_\pi$ has also been discussed.

\vspace{2mm} The above statements reveal to be crucial for the
discussion of the existence of the dynamics of systems where a
{\it cut-off} has been introduced, as we shall see later. Examples
for which these conditions are satisfied have been discussed in
\cite{bit}(Examples 4.4 and 4.5).

\section{Applications to regularized systems}

As we have discussed in the Introduction, given a physical system
$\Sys$, the first step in dealing with it consists in replacing
$\Sys$ with a whole family of {\em regularized} systems
$\{\Sys_L=\{\A_L,\Sigma,\alpha_L^t\}, L\in\Lambda\}$, obtained by
introducing some cutoff which is related to $\Sys$ itself. We
suppose that the dynamics $\alpha_L^t$ is generated by a
*-derivation $\delta_L$. The procedure of the previous section
suggests to introduce the following

\begin{defn}
A family $\{\Sys_L, L\in\Lambda\}$  is said to be c-representable
if  there exists a *-representation $\pi$ of $(\A,\Ao)$ such that:
\begin{itemize}
\item[(i)]  $\pi$ is $(\tau-\tau_s)$-continuous; \item[(ii)] $\pi$
is ultra-cyclic with ultra-cyclic vector $\xi_0$; \item [(iii)] if
$\pi$ is such that $\pi(x)=0$, then $\pi(\delta_L(x))=0$, $\forall
L\in\Lambda$.

Any such representation $\pi$ is said to be a c-representation.
\end{itemize}
\label{Definition 41}
\end{defn}

\begin{prop}

Let $\{\Sys_L, L\in\Lambda\}$ be a c-representable family and
$\pi$ a c-representation. Let $h_L=h_L^*\in\A_L$ be the element
which implements $\delta_L$: $\delta_L(x)=i[h_L,x]$, for all
$x\in\Ao$. Suppose that the following conditions are satisfied:

(1) $\delta_L(x)$ is $\tau$-Cauchy for all $x\in\Ao$;

(2) $\sup_L\|\pi(h_L)\xi_0\|<\infty$.

Then, one has
\begin{itemize}
\item[(a)] $\delta(x)=\tau-\lim_L\delta_L(x)$ exists in $\A$ and
is a *-derivation of $\Ao$; \item[(b)] $\delta_\pi$, the
*-derivation induced by $\pi$, is well defined and spatial.
\end{itemize}
\label{theorem41}
\end{prop}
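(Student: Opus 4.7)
The plan is to derive the proposition as a specialization of \cite[Proposition 4.3]{bit}, whose statement has been recalled just before Definition \ref{Definition 41}; essentially all the work consists in verifying that its hypotheses are met by the c-representable data at hand.

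First I would dispose of existence. By the standing completeness assumption on $(\A[\tau],\Ao)$, hypothesis (1) immediately gives a well-defined limit $\delta(x):=\tau-\lim_L\delta_L(x)\in\A$ for every $x\in\Ao$. The continuity of the involution and the separate continuity of the multiplications in $\A[\tau]$ allow one to pass to the $\tau$-limit in the three defining relations of Definition \ref{Definition 3.1}, so that $\delta$ inherits from the $\delta_L$'s linearity, the involutive property, and the Leibniz rule; hence $\delta$ is a *-derivation of $\Ao$, settling (a).

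Next I would identify the spatial structure at the regularized level. Since $h_L\in\A_L\subset\Ao$, we have $H_L:=\pi(h_L)\in\pi(\Ao)\subset\LL^\dagger(\D_\pi)\subset\LDD$, and applying $\pi$ to $\delta_L(x)=i[h_L,x]$ yields
\begin{equation*}
\delta_\pi^{(L)}(\pi(x))=\pi(\delta_L(x))=i\{H_L\,\pi(x)-\pi(x)\,H_L\},\qquad x\in\Ao,
\end{equation*}
so that each induced *-derivation $\delta_\pi^{(L)}$ is spatial with implementing operator $H_L$. Hypothesis (2) is then precisely the uniform bound $\sup_L\|H_L\xi_0\|<\infty$ demanded by \cite[Proposition 4.3]{bit}.

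It remains to check that $\delta_\pi$ itself is well-defined, i.e.\ that $\pi(x)=0$ forces $\pi(\delta(x))=0$. Condition (iii) of Definition \ref{Definition 41} gives $\pi(\delta_L(x))=0$ for every $L$, and the $(\tau-\tau_s)$-continuity of $\pi$ lets us pass to the limit to obtain $\pi(\delta(x))=\tau_s-\lim_L\pi(\delta_L(x))=0$. With this, all the hypotheses of \cite[Proposition 4.3]{bit} are in place, and a single appeal to that result delivers (b), producing also an operator $H\in\LDD$ implementing $\delta_\pi$ and whose relation to the net $\{H_L\}$ is the one described there. I do not expect a substantive obstacle beyond this last well-definedness check, which is precisely the reason the $(\tau-\tau_s)$-continuity was built into the notion of c-representation.
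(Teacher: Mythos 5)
Your proposal is correct and follows essentially the same route as the paper: both reduce the statement to \cite[Proposition 4.3]{bit} by checking that $H_L=\pi(h_L)$ lies in $\LL^\dagger(\D_\pi)\subset\LDD$, implements $\delta_\pi^{(L)}$, and satisfies the uniform bound $\sup_L\|H_L\xi_0\|<\infty$. The only (trivial) item the paper verifies that you leave implicit is the symmetry $H_L=H_L^\dagger$, which follows from $h_L=h_L^*$ and $\pi$ being a *-representation; conversely, your explicit check that well-definedness of $\delta_\pi$ passes to the limit via $(\tau-\tau_s)$-continuity is a point the paper glosses over.
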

\begin{proof}
The proof of the first statement is trivial.

Let us define $\delta_L^{(\pi)}(\pi(x))=\pi(\delta_L(x))$, $x\in
\Ao$, and $H_L^{(\pi)}=\pi(h_L)$. Then we have
$\delta_L^{(\pi)}(\pi(x))=i[h_L^{(\pi)},\pi(x)]$, which means that
$\delta_L^{(\pi)}(\pi(x))$ is spatial and it is implemented by
$H_L^{(\pi)}$. In order to apply Proposition 4.3 of \cite{bit} we
have to check that $H_L^{(\pi)}$ satisfies the following
requirements: (a) $H_L^{(\pi)}={H_L^{(\pi)}}^\dagger$; (b)
$H_L^{(\pi)}\in\LL(\D_\pi,\D_\pi')$; (c)
$H_L^{(\pi)}\xi_0\in\Hil_\pi$; (d)
$\delta_L^{(\pi)}(\pi(x))=i\{H_L^{(\pi)}\circ\pi(x)-\pi(x)\circ
H_L^{(\pi)}\}$; (e) $\sup_L\|\pi(H_L)\xi_0\|<\infty$.

Condition (a) follows from the self-adjointness of $h_L$ and from
the fact that $\pi$ is a *-representation. Condition (b) holds in
an even stronger form. In fact, since $h_L$ belongs to
$\A_L\subset\Ao$, then
$H_L^{(\pi)}\in\LL^\dagger(\D_\pi)\subset\LL(\D_\pi,\D_\pi')$. For
this reason we also have that
$H_L^{(\pi)}\xi_0\in\D_\pi\subset\Hil_\pi$ while condition (d) is
satisfied without even the need of using the "$\circ$"
multiplication. Finally, condition (e) coincides with assumption
(2). Proposition 4.3 of \cite{bit} implies therefore the statement
and, in particular, it says that the implementing operator of
$\delta^{(\pi)}$, $H^{(\pi)}$, satisfies the following properties:
$H^{(\pi)}={H^{(\pi)}}^\dagger$;
$H^{(\pi)}\in\LL(\D_\pi,\D_\pi')$; $H^{(\pi)}\xi_0\in\Hil_\pi$ and
$\delta^{(\pi)}(\pi(x))=i\{H^{(\pi)}\circ\pi(x)-\pi(x)\circ
H^{(\pi)}\}$, $\forall x\in\Ao$.

\end{proof}

\vspace{2mm}

{\bf Remarks:--} (1) It is clear that if the sequence $\{h_L\}$ is
$\tau$-convergent, then assumption (2) of the above Proposition is
automatically satisfied, at least if $L$ is a discrete index.

(2) It is interesting to observe also that the outcome of this
Proposition is that any physical system $\Sys$ whose related
family $\{\Sys_L, L\in\Lambda\}$ is c-representable {\em admits an
effective hamiltonian in the sense of } \cite{bagtra1996}.

\vspace{4mm}

Now we show how to use the previous results, together with some
statement contained in \cite{bagtra1996}, to define the time
evolution of $\Sys$. We will use here quite a special strategy,
which is suggested by our previous result on the existence of an
effective hamiltonian. Many other possibilities could be
considered as well, and we will discuss some of them in Section 4.

\vspace{2mm}

First of all we will assume that the $h_L$'s introduced in the
previous section can be written in terms of some (intensive)
elements $s_L^\alpha$, $\alpha=1,2,..,N$, which are assumed to be
hermitian (this is not a big constraint, of course), and
$\tau$-converging to some elements $s^\alpha\in\A$ { commuting
with all elements of $\Ao$}:
$$
s^\alpha=\tau-\lim_Ls_L^\alpha, \hspace{1cm} [s^\alpha,x]=0,\,
\forall x\in\Ao.
$$
It is worth remarking here that this is what happens, for
instance, in all mean field spin models, where the elements
$s_L^\alpha$ are nothing but the mean magnetization
$s_V^\alpha=\frac{1}{|V|}\sum_{i\in V}\sigma_i^\alpha$, \cite{bm}.

In order to ensure that all the powers of these elements converge,
which is what happens in many concrete applications, \cite{bm},
\cite{bagtra1996} and references therein, we  introduce here the
following definition, which is suggested by \cite{bit}:

\begin{defn}
We say that $\{s_L^\alpha\}$  is a uniformly $\tau$-continuous
sequence if, for each continuous seminorm $p$ of $\tau$ and for
all $\alpha=1,2,...,N$, there exist another continuous seminorm
$q$ of $\tau$ and a positive constant $c_{p,q,\alpha}$ such that
\be p(s_L^\alpha a)\leq c_{p,q,\alpha}q(a), \, \forall a\in\A,
\,\forall L\in\Lambda. \label{51}\en \label{Definition 42}
\end{defn}

Because of the properties of $\tau$ it is easily checked that
(\ref{51}) also implies that $p(as_L^\alpha)\leq
c_{p,q,\alpha}q(a)$, $\forall a\in\A$, and that the same
inequalities can be extended to $s^\alpha$.

It is now straightforward to prove the following

\begin{lemma}
If $\{s_L^\alpha\}$  is a uniformly $\tau$-continuous sequence and
if $\tau-\lim_Ls_L^\alpha=s^\alpha$ for $\alpha=1,2,..,N$, then
$\tau-\lim_L\left(s_L^\alpha\right)^k=\left(s^\alpha\right)^k$ for
$\alpha=1,2,..,N$ and $k=1,2,...$.
\end{lemma}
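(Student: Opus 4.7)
The plan is to proceed by induction on $k$, with the base case $k=1$ being precisely the hypothesis of the lemma. For the inductive step I would first establish the technical estimate that, for each $k\geq 1$, continuous seminorm $p$ and index $\alpha$, there exist a continuous seminorm $q_k$ and a constant $c_{k,p,\alpha}$ such that
$$p\bigl((s_L^\alpha)^k a\bigr)\leq c_{k,p,\alpha}\,q_k(a), \qquad p\bigl(a(s_L^\alpha)^k\bigr)\leq c_{k,p,\alpha}\,q_k(a), \qquad \forall a\in\A,\,\forall L\in\Lambda.$$
This follows by straightforward iteration of \eqref{51} (and the right-multiplication version of it noted just after Definition \ref{Definition 42}); crucially, the constant in \eqref{51} is uniform in $L\in\Lambda$, so iteration preserves uniformity.

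Assuming inductively that $\tau\textrm{-}\lim_L (s_L^\alpha)^k=(s^\alpha)^k$, in particular that $\{(s_L^\alpha)^k\}$ is $\tau$-Cauchy, I would use the algebraic identity
$$(s_L^\alpha)^{k+1}-(s_M^\alpha)^{k+1}=\bigl[(s_L^\alpha)^k-(s_M^\alpha)^k\bigr]\,s_L^\alpha+(s_M^\alpha)^k\,(s_L^\alpha-s_M^\alpha),$$
whose right-hand side is well-defined since all factors lie in $\Ao$. Applying $p$ to both sides and invoking Step 1 gives
$$p\bigl((s_L^\alpha)^{k+1}-(s_M^\alpha)^{k+1}\bigr)\leq c_{1,p,\alpha}\,q_1\bigl((s_L^\alpha)^k-(s_M^\alpha)^k\bigr)+c_{k,p,\alpha}\,q_k(s_L^\alpha-s_M^\alpha).$$
Both summands tend to $0$ as $L,M\to\infty$: the first by the inductive Cauchy property, the second by the $\tau$-convergence of $\{s_L^\alpha\}$. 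Hence $\{(s_L^\alpha)^{k+1}\}$ is $\tau$-Cauchy, and by the assumed completeness of $\A[\tau]$ it admits a $\tau$-limit, which we identify with $(s^\alpha)^{k+1}$.

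The main obstacle is conceptual rather than computational: the element $(s^\alpha)^{k+1}$ is not a priori defined, because iterated products of elements of $\A\setminus\Ao$ are in general unavailable in a quasi $^\ast$-algebra. The argument sidesteps this by never forming products of two limit elements; instead, all products are performed inside $\Ao$ (where $(s_L^\alpha)^j$ lives for every $j$), and the limit element $(s^\alpha)^{k+1}\in\A$ arises solely from $\tau$-completeness. The hypothesis that $s^\alpha$ commutes with $\Ao$ guarantees consistency of this definition with the structure of $\A$, and the iterated uniform continuity estimate in Step 1 is the key quantitative ingredient that makes the Cauchy estimate collapse to two controllable terms.
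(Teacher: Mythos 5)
Your proof is correct: the telescoping identity combined with the $L$-uniform continuity estimate of Definition \ref{Definition 42} (iterated to powers, and used in both its left and right versions) is exactly the ``straightforward'' argument the paper has in mind, since the authors state the lemma without proof. Your closing remark on why $(s^\alpha)^{k+1}$ must be \emph{defined} as the $\tau$-limit rather than as an iterated product in $\A$ is also consistent with how the paper itself treats the notation (cf.\ Remark (2) following the subsequent Proposition), so nothing is missing.
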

This lemma has the following consequence: if we define the
multiple commutators $[x,y]_k$ as usual ($[x,y]_1=[x,y]$,
$[x,y]_k=[x,[x,y]_{k-1}]$), then one has
\begin{prop}
Suppose that
\begin{itemize}
\item[(1)] $\forall x\in\Ao$ $[h_L,x]$ depends on $L$ only through
$s_L^\alpha$; \item[(2)]
$s_L^\alpha\stackrel{\tau}{\longrightarrow} s^\alpha$ and
$\{s_L^\alpha\}$  is a uniformly $\tau$-continuous sequence.
\end{itemize}
Then, { for each $k\in \mathbb{N}$,}  the following limit exists
$$
\tau-\lim_Li^k[h_L,x]_k=\tau-\lim_L\delta_L^k(x), \, \forall
x\in\Ao,$$ and defines an element of $\A$ which we call
$\delta^{(k)}(x)$.
\end{prop}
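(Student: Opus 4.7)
The plan is to prove the statement by induction on $k$, establishing as the induction hypothesis not merely the bare existence of the limit but the stronger \emph{structural} claim that $\delta_L^k(x)$ can be written as a finite sum
\[
\delta_L^k(x)=\sum_j a_j^{(k,x)}\,M_j^{(k)}(s_L^1,\dots,s_L^N)\,b_j^{(k,x)},
\]
where the monomials $M_j^{(k)}$ in the $s_L^\alpha$ are fixed (independent of $L$) and the coefficients $a_j^{(k,x)}, b_j^{(k,x)}\in\Ao$ depend only on $x$ and $k$. Note that each $\delta_L^k(x)=i^k[h_L,x]_k$ is a well-defined element of $\A$ because at every step one commutes with $h_L\in\A_L\subset\Ao$, so that the commutator of an element of $\Ao$ with an element of $\A$ stays in $\A$. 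For $k=1$ the structural claim is precisely hypothesis (1).

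For the inductive step, I would write $\delta_L^k(x)=i[h_L,\delta_L^{k-1}(x)]$, insert the inductive representation, and expand using the Leibniz rule applied monomial by monomial. The key point is that the Leibniz rule is available inside the quasi *-algebra $(\A,\Ao)$ whenever at least one factor of each product lies in $\Ao$, which is the case here since the $s_L^\alpha$'s are hermitian elements of $\Ao$ and the coefficients $a_j^{(k-1,x)},b_j^{(k-1,x)}\in\Ao$. The terms produced involve $[h_L,s_L^\alpha]$, $[h_L,a_j^{(k-1,x)}]$ and $[h_L,b_j^{(k-1,x)}]$; each of these is a commutator of $h_L$ with an element of $\Ao$, so hypothesis (1) applies and each of them is itself representable as a fixed polynomial in the $s_L^\alpha$'s with coefficients in $\Ao$ independent of $L$. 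Collecting and re-indexing yields a representation of $\delta_L^k(x)$ of the required form.

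With the structural claim in hand, the convergence is essentially a multivariable version of the previous Lemma. The Lemma upgrades $s_L^\alpha\to s^\alpha$ to convergence of all powers $(s_L^\alpha)^k\to(s^\alpha)^k$, and the uniform $\tau$-continuity of the sequences $\{s_L^\alpha\}$ — together with the observation made right after Definition~\ref{Definition 42} that the estimate extends symmetrically and to the limits $s^\alpha$ — allows one to show, by a secondary induction on the degree of the monomial, that any fixed monomial in the $s_L^\alpha$ sandwiched between $\Ao$-coefficients $\tau$-converges to the corresponding expression in the $s^\alpha$. Summing finitely many such terms gives $\tau$-$\lim_L\delta_L^k(x)\in\A$ by completeness of $(\A,\tau)$; we call this limit $\delta^{(k)}(x)$.

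The main obstacle is the bookkeeping in the inductive step: one must verify that the polynomial form is preserved under one more application of $\mathrm{ad}(h_L)$, which forces a careful simultaneous use of hypothesis (1) applied both to $x=s_L^\alpha$ and to each coefficient in $\Ao$, together with the limitation that Leibniz rules in a quasi *-algebra are only valid when one factor stays in $\Ao$. A secondary technical point is the passage from convergence of single factors $s_L^\alpha$ to joint convergence of arbitrary monomials, where the uniform $\tau$-continuity hypothesis enters crucially to bound the error telescoped across several factors.
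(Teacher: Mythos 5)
The paper itself omits the proof of this Proposition, saying only that it is an easy extension of the argument in \cite{bagtra1996}; your strategy --- a structural induction showing that $\delta_L^k(x)$ is a fixed polynomial in the $s_L^\alpha$ with $L$-independent coefficients in $\Ao$, followed by a telescoping convergence argument for monomials based on the Lemma and on uniform $\tau$-continuity --- is exactly that intended argument, and the convergence half is sound. One cosmetic repair: the two-sided form $\sum_j a_j M_j(s_L)b_j$ is not stable under $\mathrm{ad}(h_L)$, because substituting the polynomial expression for $[h_L,a_j]$ into $[h_L,a_j]\,M_j(s_L)\,b_j$ produces words in which $\Ao$-coefficients and $s_L$-factors are interleaved, and at finite $L$ the $s_L^\alpha$ need not commute with $\Ao$ (only the limits $s^\alpha$ do), so you cannot simply ``collect and re-index''. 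The induction hypothesis should be stated for general alternating words $c_0 s_L^{\beta_1}c_1\cdots s_L^{\beta_m}c_m$ with fixed $c_i\in\Ao$; your convergence argument goes through verbatim for these, using separate continuity for the fixed $\Ao$-factors and \eqref{51} (together with its right-handed and limiting versions) to strip off the $s_L$-factors.

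The more substantive gap is your treatment of the terms $[h_L,s_L^\alpha]$. Hypothesis (1) is quantified over a \emph{fixed} $x\in\Ao$: for each such $x$ it asserts that $[h_L,x]$ is an $L$-independent expression in the $s_L^\alpha$. Applied to the $L$-dependent element $x=s_L^\alpha$ it only yields, for each separate $L$, a representation whose polynomial may itself vary with $L$; the claim that $[h_L,s_L^\alpha]$ ``is itself representable as a fixed polynomial in the $s_L^\alpha$'s with coefficients independent of $L$'' is therefore an additional assumption, not a consequence of (1). It does hold in the motivating mean-field models (where $[h_V,s_V^\alpha]$ is computed explicitly as a fixed polynomial in the $s_V^\beta$), and it is plausibly part of the intended meaning of the informal hypothesis ``depends on $L$ only through $s_L^\alpha$'', but as written your inductive step does not close from (1) alone: you should either strengthen the reading of (1) to cover inputs built from the $s_L^\alpha$ themselves, or add this as an explicit hypothesis. (You also use tacitly that $s_L^\alpha\in\Ao$, which the paper does not state but clearly intends.)
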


\vspace{2mm}

{\bf Remarks:--} (1) The proof is an easy extension of that given
in \cite{bagtra1996} and will be omitted.

(2) Of course, we could replace condition (2) above directly with
the requirement that the following limits exist:
$\tau-\lim_L\left(s_L^\alpha\right)^k=\left(s^\alpha\right)^k$ for
$\alpha=1,2,..,N$ and $k=1,2,...$.

(3) It is worth noticing that we have used the notation
$\delta^{(k)}(x)$ instead of the more natural $\delta^k(x)$ since
this last quantity could not be well defined because of domain
problems, since we are not working with algebras, in general. In
other words, we cannot claim that
$\pi\left(\tau-\lim_L\delta_L^k(x)\right)=[H^{(\pi)},\pi(x)]_k$,
since the rhs could be not well-defined.

\vspace{2mm}

In order to go on, it is convenient to introduce the following
definition, \cite{bagtra1996}:

\begin{defn}
We say that $x\in \Ao$ is a generalized analytic element of
$\delta$ if, for all $t$, the series
$\sum_{k=0}^\infty\frac{t^k}{k!}\pi(\delta^{(k)}(x))$ is
$\tau_s$-convergent. The set of all generalized elements is
denoted with ${\cal G}$. \label{Definition 43}
\end{defn}

We can now prove  the following

\begin{prop}
Let $x_\gamma$ be a net of elements of $\Ao$ and suppose that,
whenever $\pi(x_\gamma)\stackrel{\tau_s}{\longrightarrow}\pi(x)$
then $x_\gamma\stackrel{\tau}{\rightarrow}x$. Then, $\forall x\in
{\cal G}$ and $\forall t\in \mathbb{R}$, the series
$\sum_{k=0}^\infty\frac{t^k}{k!}\delta^{(k)}(x)$ converges in the
$\tau$-topology to an element of $\A$ which we call $\alpha^t(x)$.

Moreover, $\alpha^t$ can be extended to the $\tau$-closure
$\overline{\cal G}$ of ${\cal G}$. \label{theorem42}
\end{prop}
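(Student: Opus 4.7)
The plan is to lift the $\tau_s$-convergence on the $\pi$-side to $\tau$-convergence in $\A$, exploiting completeness of $\A[\tau]$ together with the hypothesis (which is essentially that $\pi^{-1}$ is $(\tau_s-\tau)$-continuous on $\pi(\Ao)$). Fix $x \in {\cal G}$ and $t \in \mathbb{R}$, and set
$$
S_n := \sum_{k=0}^n \frac{t^k}{k!} \delta^{(k)}(x) \in \A,
$$
each $\delta^{(k)}(x)$ lying in $\A$ by the preceding proposition. By linearity of $\pi$ and the definition of ${\cal G}$, $\pi(S_n) = \sum_{k=0}^n \frac{t^k}{k!} \pi(\delta^{(k)}(x))$ converges in $\tau_s$, hence is $\tau_s$-Cauchy.

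Next, I exploit the $\tau$-density of $\Ao$ in $\A$ to select, for each $n$, an element $y_n \in \Ao$ that is $\tau$-close to $S_n$; a standard diagonal/net argument across the defining seminorms of $\tau$ produces such $\{y_n\}$ with $S_n - y_n \to 0$ in $\tau$. Then $(\tau-\tau_s)$-continuity of $\pi$ gives $\pi(y_n) - \pi(S_n) \to 0$ in $\tau_s$, so $\{\pi(y_n)\}$ is itself $\tau_s$-Cauchy. Regard the differences $y_n - y_m$ as a net of elements of $\Ao$ indexed by pairs $(n,m)$; its $\pi$-image $\tau_s$-converges to $\pi(0)=0$, and so the hypothesis yields $y_n - y_m \to 0$ in $\tau$, i.e.\ $\{y_n\}$ is $\tau$-Cauchy in $\A$. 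Completeness of $\A[\tau]$ produces $z \in \A$ with $y_n \to z$ in $\tau$, whence $S_n \to z$ in $\tau$. Set $\alpha^t(x) := z$, which proves the first assertion.

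For the extension of $\alpha^t$ to $\overline{\cal G}$, the strategy is continuity and density: given $x \in \overline{\cal G}$ and a net $\{x_\gamma\} \subset {\cal G}$ with $x_\gamma \to x$ in $\tau$, I would show that $\{\alpha^t(x_\gamma)\}$ is $\tau$-Cauchy in $\A$ and that its limit is independent of the approximating net, and then define $\alpha^t(x)$ to be that common limit (which exists by $\tau$-completeness of $\A$). I expect the chief obstacle to lie precisely here: the pointwise construction of $\alpha^t$ on ${\cal G}$ through the Taylor-type series carries no automatic uniform $\tau$-continuity estimate, because different elements of ${\cal G}$ may exhibit very different convergence behaviour for $\sum_k \frac{t^k}{k!} \pi(\delta^{(k)}(\cdot))$. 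Some additional quantitative control — presumably coming from the spatiality of $\delta_\pi$ or the structure of the $\delta_L$'s — is needed to push $\{\alpha^t(x_\gamma)\}$ through as a $\tau$-Cauchy net. By contrast, the Cauchy-transfer step in the second paragraph is routine once one notices that the hypothesis can be applied to the double-indexed auxiliary net $\{y_n - y_m\} \subset \Ao$, which transforms $\tau_s$-Cauchyness of $\{\pi(y_n)\}$ into $\tau$-Cauchyness of $\{y_n\}$ without needing $\{S_n\}$ itself to lie in $\Ao$.
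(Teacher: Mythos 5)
Your treatment of the first assertion lands on the right conclusion but takes a detour that introduces an avoidable weak point. The paper uses the hypothesis in its equivalent quantitative form: since the hypothesis says precisely that the linear map $\pi^{-1}$ is $(\tau_s-\tau)$-continuous on $\pi(\Ao)$, for every continuous seminorm $p$ of $\tau$ there are $c'>0$ and finitely many $\eta_1,\dots,\eta_n\in\D_\pi$ with $p(a)\le c'\sum_{j=1}^n p_{\eta_j}(\pi(a))$ for $a\in\Ao$; this inequality then extends to all $a\in\A$ by $\tau$-density of $\Ao$ and the $(\tau-\tau_s)$-continuity of $\pi$. Applied directly to the tails $\sum_{k=L}^{M}\frac{t^k}{k!}\delta^{(k)}(x)\in\A$, it converts the $\tau_s$-Cauchyness coming from $x\in{\cal G}$ into $\tau$-Cauchyness, and completeness of $\A[\tau]$ finishes. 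You instead keep the hypothesis in net form and therefore must pull the partial sums $S_n$ back into $\Ao$ via approximants $y_n$ with $S_n-y_n\to 0$ in $\tau$. That selection is not a ``standard diagonal argument'' in this generality: $\tau$ is defined by a directed family of seminorms with no countability assumption, and density of $\Ao$ only furnishes, for each \emph{finite} set of seminorms and each $\epsilon$, some approximant; a single sequence $(y_n)_{n\in\mathbb{N}}$ with $S_n-y_n\to 0$ in the full topology need not exist unless $\tau$ is metrizable. You would have to re-index by triples $(n,F,\epsilon)$ and rerun the Cauchy transfer for the resulting net, at which point the argument is strictly more involved than extracting the seminorm estimate in the first place. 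So this step is fixable, but as written it has a gap, and the direct route is both shorter and covers the general case.

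On the second assertion you do not actually give a proof: you correctly observe that the pointwise series construction carries no automatic uniform estimate and then stop. The paper disposes of the extension in one line by invoking the $\tau$-continuity of $\alpha^t$ on the subspace ${\cal G}$ (which, together with linearity of $\alpha^t$ and completeness of $\A[\tau]$, would indeed give the extension to $\overline{\cal G}$ by the usual density argument, with well-definedness of the limit for free). Your reservation about where that continuity comes from is not unreasonable --- the paper does not derive it explicitly from the stated hypotheses, and the seminorm estimate above controls the tails of the series for each fixed $x$ rather than the map $x\mapsto\alpha^t(x)$ --- but as a proof of the proposition as stated your attempt is incomplete on this point: you need either to establish that continuity or to supply the quantitative control you allude to.
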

\begin{proof}
Because of the assumption, given a seminorm $p$ of $\tau$, there
exist a positive constant $c'$ and some vectors $\{\eta_j,\,
j=1,..,n\}$ in $\D_\pi$ such that $p(x)\leq c' \sum_{j=1}^n
p_{\eta_j}(\pi(x))$, for all $x\in\A$.  Then we have, if
$x\in{\cal G}$, $L,M\in\mathbb{N}$ with $M>L$:
$$
p\left(\sum_{k=L}^M\frac{t^k}{k!}\delta^{(k)}(x)\right)\leq c'
\sum_{j=1}^n
p_{\eta_j}\left(\sum_{k=L}^M\frac{t^k}{k!}\pi(\delta^{(k)}(x))\right)\stackrel{L,M}{\longrightarrow}
0,
$$
because $x\in{\cal G}$. Therefore we can put
$\alpha^t(x)=\tau-\sum_{k=0}^\infty\frac{t^k}{k!}\delta^{(k)}(x)$.

The extension of $\alpha^t$ to $\overline{\cal G}$ is simply a
consequence of its $\tau$-continuity.

\end{proof}

{\bf Remark:--} It is worth remarking that the assumptions of this
Proposition are rather strong. In particular, for instance, the
fact that for all $x\in\Ao$ the following estimate holds,
$p(x)\leq c \sum_{j=1}^n p_{\eta_j}(\pi(x))$, implies that the
representation $\pi$ is faithful and that $\pi^{-1}$ is
continuous. Moreover,  the non triviality of the set ${\cal G}$
must be proven case by case. It is also in view of these facts
that in the next section we further specify our algebraic set-up
in order to avoid the use of these strong assumptions in the
analysis of the existence of $\alpha^t$.

\medskip

 We will now discuss an approach, different from the one considered
 so far, to what we have called {\em the exponentiation problem}, i.e.
the possibility of deducing the existence of the time evolution
for certain elements of the *-algebra $\Ao$ (actually a C*-algebra
in many applications) starting from some given *-derivation. In
what follows $\pi$ is assumed to be a faithful *-representation of
the quasi *-algebra $(\A,\Ao)$ and $\delta$  a *-derivation on
$\Ao$. As always we will assume that the *-derivation induced by
$\pi$, $\delta_\pi$, is well-defined on $\pi(\Ao)$ with values in
$\pi(\A)$.

We define the following subset of $\Ao$ \be
\Ao(\delta):=\{x\in\Ao:\quad\delta^k(x)\in\Ao,\quad \forall
k\in\mathbb{N}_0\}.\label{extra1}\en It is clear that
$\Ao(\delta)$ depends on $\delta$: the more regular $\delta$ is,
the larger the set $\Ao(\delta)$ turns out to be. For example, if
$\delta$ is inner in $\Ao$ and the implementing element $h$
belongs to $\Ao$, then $\Ao(\delta)=\Ao$. For general $\delta$, we
can surely say that $\Ao(\delta)$ is not empty since it contains,
at least, all the multiples of the identity $\1$ of $\Ao$.

It is straightforward to check that $\Ao(\delta)$ is a *-algebra
which is mapped into itself by $\delta$. Moreover we also find
that $\pi(\delta^k(x))=\delta_\pi^k(\pi(x))$, for all
$x\in\Ao(\delta)$ and for all $k\in\mathbb{N}_0$. This also
implies that, for all $k\in\mathbb{N}_0$, and for all
$x\in\Ao(\delta)$, $\delta_\pi^k(\pi(x))\in\pi(\Ao)$. This
suggests to introduce the following subset of $\pi(\Ao)$, $
\Ao(\delta)^\pi:=\{\pi(x)\in\pi(\Ao):\,\delta_\pi^k(\pi(x))\in\pi(\Ao),\,
\forall k\in\mathbb{N}_0\}$, and it is clear that
$x\in\Ao\Leftrightarrow \pi(x)\in\pi(\Ao)$.

Let us now introduce on $\A$ the topology $\sigma_s$ defined via
$\tau_s$ in the following way: \be \A\ni a\rightarrow
q_\xi(a)=p_\xi(\pi(a))=\|\pi(a)\xi\|, \quad \xi\in\D_\pi.
\label{extra2}\en {It is worth noticing that $\sigma_s$ does not
make of $(\A, \Ao)$ a locally convex quasi *-algebra, since the
multiplication is not separately continuous.} We can now state the
following

\begin{thm}

Let $(\A,\Ao)$ be a quasi *-algebra with identity, $\delta$ a
*-derivation on $\Ao$ and $\pi$ a faithful *-representation of
$(\A,\Ao)$ such that the induced derivation $\delta_\pi$ is well
defined. Then, the following statements hold:

(1) Suppose that \be \forall \eta\in\D_\pi \, \exists c_\eta>0:
p_\eta(\delta_\pi(\pi(x)))\leq c_\eta p_\eta(\pi(x)), \quad
\forall x\in \Ao(\delta), \label{extra3}\en then
$\sum_{k=0}^\infty\frac{t^k}{k!}\delta^k(x)$ converges for all $t$
in the topology $\sigma_s$ to an element of
$\overline{\Ao(\delta)}^{\sigma_s}$ which we call $\alpha^t(x)$;
$\alpha^t$ can be extended to $\overline{\Ao(\delta)}^{\sigma_s}$.

(2) Suppose that, instead of (\ref{extra3}), the following
inequality holds $$\exists c>0: \, \forall \eta_1\in\D_\pi \,\,
\exists A_{\eta_1}>0,\, n\in\mathbb{N} \mbox{ and }
\eta_2\in\D_\pi:$$ \be p_{\eta_1}(\delta_\pi^k(\pi(x)))\leq
A_{\eta_1} c^k k! k^n p_{\eta_2}(\pi(x)), \quad \forall x\in
\Ao(\delta),\,\forall k\in\mathbb{N}_0, \label{extra4}\en  then
$\sum_{k=0}^\infty\frac{t^k}{k!}\delta^k(x)$ converges, for
$t<\frac{1}{c}$ in the topology $\sigma_s$ to an element of
$\overline{\Ao(\delta)}^{\sigma_s}$ which we call $\alpha^t(x)$;
$\alpha^t$ can be extended to $\overline{\Ao(\delta)}^{\sigma_s}$.
\label{theoremextra1}
\end{thm}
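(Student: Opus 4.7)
The plan is to reduce everything to a direct majorization of the exponential series \emph{inside the defining seminorms of} $\sigma_s$, exploiting the fact that for $x\in \Ao(\delta)$ one has $\pi(\delta^k(x))=\delta_\pi^k(\pi(x))$ (noted just above the theorem) and that, by the very definition of $\sigma_s$, $q_\eta(a)=p_\eta(\pi(a))$.

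For part (1), I would first observe that $\Ao(\delta)$ is $\delta$-invariant: if $\delta^k(x)\in \Ao$ for every $k\geq 0$, then $\delta^k(\delta(x))=\delta^{k+1}(x)\in \Ao$ as well, so $\delta(x)\in \Ao(\delta)$. This is what lets us iterate (\ref{extra3}). A straightforward induction gives
\be
p_\eta\bigl(\delta_\pi^k(\pi(x))\bigr)\leq c_\eta^{\,k}\, p_\eta(\pi(x)),\qquad x\in \Ao(\delta),\; k\in\mathbb{N}_0.
\en
Translated back through $\pi$, for the partial sums $S_N(x):=\sum_{k=0}^N \tfrac{t^k}{k!}\delta^k(x)$ this yields
\be
q_\eta\bigl(S_N(x)-S_M(x)\bigr)\leq \sum_{k=M+1}^N \frac{|t|^k c_\eta^k}{k!}\, q_\eta(x),
\en
which is the tail of the convergent series for $e^{|t|c_\eta}q_\eta(x)$. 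Hence $\{S_N(x)\}$ is $\sigma_s$-Cauchy for every $t\in\mathbb{R}$, and its limit $\alpha^t(x)$ belongs, by construction, to $\overline{\Ao(\delta)}^{\sigma_s}$. Passing to the limit in the bound above gives $q_\eta(\alpha^t(x))\leq e^{|t|c_\eta}q_\eta(x)$, so $\alpha^t$ is $\sigma_s$-continuous on $\Ao(\delta)$ and extends uniquely to $\overline{\Ao(\delta)}^{\sigma_s}$ by the standard continuous-extension argument.

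For part (2), no induction is needed: the hypothesis (\ref{extra4}) already supplies the bound on $\delta_\pi^k$ directly. The $k$-th term of the series is controlled by
\be
q_{\eta_1}\!\left(\tfrac{t^k}{k!}\delta^k(x)\right)\leq A_{\eta_1}(|t|c)^k k^n\, q_{\eta_2}(x),
\en
and the numerical series $\sum_{k\geq 0}(|t|c)^k k^n$ converges precisely when $|t|<1/c$. The rest of the argument (Cauchyness of partial sums, location of the limit in $\overline{\Ao(\delta)}^{\sigma_s}$, $\sigma_s$-continuity of $\alpha^t$, extension by continuity) is identical to part (1).

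The step I expect to be the most delicate is not the analytic estimates themselves, which are entirely elementary once (\ref{extra3})/(\ref{extra4}) are iterated, but rather the \emph{ambient space} in which the $\sigma_s$-limit is taken: the excerpt explicitly points out that $\sigma_s$ does not make $(\A,\Ao)$ a locally convex quasi *-algebra, and no $\sigma_s$-completeness of $\A$ is assumed. The cleanest way to handle this is to perform the whole argument on the $\pi$-side, inside $\pi(\A)\subset \LL(\D_\pi,\D_\pi')$ equipped with $\tau_s$, where Cauchy nets do converge, and then pull the limit back via the faithfulness of $\pi$. Once this is done the continuity estimate $q_\eta(\alpha^t(x))\leq C(\eta,t)\,q_{\eta'}(x)$ immediately yields the extension of $\alpha^t$ to the $\sigma_s$-closure.
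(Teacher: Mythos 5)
Your proof follows essentially the same route as the paper's: iterate (\ref{extra3}) (using the $\delta$-invariance of $\Ao(\delta)$ established just before the theorem) to get $p_\eta(\delta_\pi^k(\pi(x)))\leq c_\eta^k\, p_\eta(\pi(x))$, majorize the partial sums of $\sum_k \frac{t^k}{k!}\delta^k(x)$ in each seminorm $q_\eta$ by the exponential series, conclude $\sigma_s$-Cauchyness, and extend $\alpha^t$ by continuity, with part (2) handled identically via the series $\sum_k (|t|c)^k k^n$. Your closing observation --- that neither $\sigma_s$-completeness of $\A$ nor $\tau_s$-closedness of $\pi(\A)$ is assumed, so that locating the limit genuinely requires either passing to the operator side in $\LL(\D_\pi,\D_\pi')$ or reading $\overline{\Ao(\delta)}^{\sigma_s}$ as a completion --- is a legitimate point that the paper's two-line proof passes over in silence.
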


\begin{proof}
(1) Iterating equation (\ref{extra3}) we find that
$p_\eta(\delta_\pi^k(\pi(x)))\leq c_\eta^k p_\eta(\pi(x))$, for
all natural $k$. It is easy to prove now the following inequality,
for $x\in\Ao(\delta)$ and $t>0$:
$$
q_\eta(\sum_{k=0}^N\frac{t^k}{k!}\delta^k(x))\leq
\sum_{k=0}^N\frac{t^k}{k!}q_\eta(\delta^k(x))\leq
\sum_{k=0}^N\frac{(tc_\eta )^k}{k!}q_\eta(x)\rightarrow
e^{tc_\eta}q_\eta(x).$$ This proves the existence of
$\alpha^t(x)=\sigma_s-\lim_{N,\infty}\sum_{k=0}^N\frac{t^k}{k!}\delta^k(x)$
for all $t$. The extension of $\alpha^t$ to
$\overline{\Ao(\delta)}^{\sigma_s}$ by continuity is
straightforward.

(2) The proof is based on analogous estimates.

\end{proof}

\vspace{3mm}

{\bf Remarks:--} (1) The first remark is related to the different
conditions (\ref{extra3}) and (\ref{extra4}). The second condition
is much lighter, but the price we have to pay is that $\alpha^t$
can be defined only on a finite interval.

(2) Condition (\ref{extra3}) could be changed by requiring that
the seminorms on the left and the right side of the inequality are
not necessarily the same. In this case, however, we also have to
require that the constant $c_\eta$ is independent of $\eta$ and
that $\sup_{\varphi\in\D_\pi}p_\varphi(\pi(x))<\infty$.

\vspace{3mm}

\begin{cor} Under the general assumptions of Theorem
\ref{theoremextra1} we have:

(1) If condition (\ref{extra3}) is satisfied then $\alpha^t$ maps
$\overline{\Ao(\delta)}^{\sigma_s}$ into
$\overline{\Ao(\delta)}^{\sigma_s}$ and  \be
\alpha^{t+\tau}(x)=\alpha^t(\alpha^\tau(x)), \quad \forall t,
\tau, \, \forall x\in\Ao(\delta); \label{extra5}\en

(2) If condition (\ref{extra4}) is satisfied then $\alpha^t$ maps
$\overline{\Ao(\delta)}^{\sigma_s}$ into
$\overline{\Ao(\delta)}^{\sigma_s}$ for $t<\frac{1}{c}$ and  \be
\alpha^{t+\tau}(x)=\alpha^t(\alpha^\tau(x)), \quad \forall t,
\tau, \mbox{ with } t+\tau<\frac{1}{c}, \, \forall
x\in\Ao(\delta). \label{extra5bis}\en

\end{cor}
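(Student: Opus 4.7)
The plan is to reduce the corollary to three moves: (a) use Theorem~\ref{theoremextra1} together with $\sigma_s$-continuity to show that $\alpha^t$ leaves $\overline{\Ao(\delta)}^{\sigma_s}$ invariant; (b) verify the composition law on $\Ao(\delta)$ by a double-series manipulation; (c) propagate it to the closure by continuity. The first step is essentially free: the theorem already gives $\alpha^t(\Ao(\delta))\subseteq\overline{\Ao(\delta)}^{\sigma_s}$, and the estimate $q_\eta(\alpha^t(x))\leq e^{|t|c_\eta}q_\eta(x)$ from the proof of Theorem~\ref{theoremextra1} (respectively its factorial analogue in case (2)) makes $\alpha^t$ $\sigma_s$-continuous on $\Ao(\delta)$, so the continuous extension to $\overline{\Ao(\delta)}^{\sigma_s}$ lands inside the same closed set.

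For step (b), I fix $x\in\Ao(\delta)$. Since $\Ao(\delta)$ is $\delta$-invariant, every $\delta^k(x)$ lies in $\Ao(\delta)$, so by linearity and $\sigma_s$-continuity of $\alpha^t$,
\[
\alpha^t(\alpha^\tau(x))=\sigma_s\text{-}\sum_{k=0}^\infty\frac{\tau^k}{k!}\,\sigma_s\text{-}\sum_{j=0}^\infty\frac{t^j}{j!}\,\delta^{j+k}(x).
\]
Grouping by $m=j+k$ and applying the binomial identity $\sum_{j+k=m}\tfrac{t^j\tau^k}{j!\,k!}=\tfrac{(t+\tau)^m}{m!}$, one recognizes $\sum_{m=0}^\infty\tfrac{(t+\tau)^m}{m!}\delta^m(x)=\alpha^{t+\tau}(x)$, which is the desired identity on $\Ao(\delta)$.

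The main obstacle is justifying the rearrangement and the interchange of the inner and outer $\sigma_s$-limits, which I would do by proving absolute convergence of the double series in each seminorm $q_\eta$. Under (\ref{extra3}), iteration gives $q_\eta(\delta^m(x))\leq c_\eta^m q_\eta(x)$, and the absolute double sum is bounded by $e^{(|t|+|\tau|)c_\eta}q_\eta(x)$ for arbitrary $t,\tau$. Under (\ref{extra4}), the factorial $m!$ in the estimate combines with $1/(j!\,k!)=\binom{m}{j}/m!$ for $j+k=m$ to produce a series of the form $\sum_m m^n(c(|t|+|\tau|))^m$ (with $n$ as in (\ref{extra4})), converging precisely when $|t|+|\tau|<1/c$, which accounts for the range in (\ref{extra5bis}). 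Once the rearrangement is legitimate, (\ref{extra5}) (respectively (\ref{extra5bis})) holds on $\Ao(\delta)$ and extends to $\overline{\Ao(\delta)}^{\sigma_s}$ by the $\sigma_s$-continuity of $\alpha^t$, $\alpha^\tau$ and $\alpha^{t+\tau}$ established in step (a).
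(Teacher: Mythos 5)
Your proof is correct and follows essentially the same route as the paper: the paper also reduces the composition law to the double-series identity $A_{lk}=\frac{t^l\tau^k}{l!k!}\delta^{l+k}(x)$ with the binomial regrouping, controlled by the iterated bound $q_\eta(\delta^m(x))\le c_\eta^m q_\eta(x)$ (resp.\ its factorial analogue), and then extends by $\sigma_s$-continuity. The only difference is bookkeeping: the paper organizes the limit interchange as an explicit three-term telescoping estimate on the partial sums $\alpha_N^t$, whereas you invoke absolute convergence of the double series directly, which is a cleaner packaging of the same argument.
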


\begin{proof}

(1) The proof of the first statement is a trivial consequence of
the definition of $\alpha^t$ as given in Theorem
\ref{theoremextra1}: for all
$y\in\overline{\Ao(\delta)}^{\sigma_s}$ then
$\alpha^t(y)\in\overline{\Ao(\delta)}^{\sigma_s}$.

In order to prove equation (\ref{extra5}), we begin by fixing
$x\in\Ao(\delta)$. We have, for all $\eta\in\D_\pi$,
$$
q_\eta(\alpha^{t+\tau}(x)-\alpha^t(\alpha^\tau(x)))\leq
q_\eta(\alpha^{t+\tau}(x)-\alpha_N^{t+\tau}(x))+q_\eta(\alpha_N^{t+\tau}(x)-
\alpha_N^t(\alpha_N^\tau(x)))+$$ \be
+q_\eta(\alpha_N^t(\alpha_N^\tau(x))-\alpha^t(\alpha^\tau(x))),
\label{extra6}\en where
$\alpha_N^t(x)=\sum_{k=0}^N\frac{t^k}{k!}\delta^k(x)$. First we
observe that, because of Theorem \ref{theoremextra1},
$q_\eta(\alpha^{t+\tau}(x)-\alpha_N^{t+\tau}(x))\rightarrow 0$ for
all $t, \tau$ and $\eta\in \D_\pi$. The proof of the convergence
to zero of the third contribution,
$q_\eta(\alpha_N^t(\alpha_N^\tau(x))-\alpha^t(\alpha^\tau(x)))
\rightarrow 0$, follows from the fact that $\forall\,
\eta\in\D_\pi$ and $\forall\,\epsilon>0$ there exists
$N(\epsilon,\eta)>0$ such that, $\forall \,N>N(\epsilon,\eta)$,
$$
q_\eta(\alpha_N^t(\alpha_N^\tau(x))-\alpha^t(\alpha^\tau(x)))\leq
q_\eta((\alpha_N^t-\alpha^t)(\alpha_N^\tau(x)))+
q_\eta(\alpha^t(\alpha_N^\tau(x)-\alpha^\tau(x)))\leq$$
$$
\leq \sum_{k=N+1}^\infty
\frac{(tc_\eta)^k}{k!}q_\eta(\alpha_N^t(x))+\sum_{k=N+1}^\infty
\frac{\tau^k}{k!}q_\eta(\alpha^t(\delta^k(x)))\leq $$
$$ \leq
\sum_{k=N+1}^\infty
\frac{(tc_\eta)^k}{k!}(\epsilon+q_\eta(\alpha^t(x)))+q_\eta(x)e^{tc_\eta}\sum_{k=N+1}^\infty
\frac{(\tau c_\eta)^k}{k!}\rightarrow 0,
$$
as $N\rightarrow\infty$, for all $t, \tau$ (which we are assuming
to be positive here) and $x\in\Ao(\delta)$.

The conclusion follows from the fact that we also have \be
q_\eta(\alpha_N^{t+\tau}(x)-
\alpha_N^t(\alpha_N^\tau(x)))\rightarrow 0,\label{extra7}\en for
all $t, \tau$
 and $x\in\Ao(\delta)$. This can be proved by a direct estimate on
 the difference $\alpha_N^{t+\tau}(x)-
\alpha_N^t(\alpha_N^\tau(x))$ which can be written as
$\sum_{n=0}^N\sum_{l+k=n}A_{lk}-\sum_{l=0}^N\sum_{k=0}^NA_{lk}$,
where we have introduced
$A_{lk}=\frac{t^l\tau^k}{l!k|}\delta^{l+k}(x)$ for shortness.
Equation (\ref{extra7}) can now be proved using the same estimate
as for the third contribution.

The extension to $\overline{\Ao(\delta)}^{\sigma_s}$ is now
straightforward.

(2) The proof is only a minor modification of the one above.

\end{proof}

\vspace{3mm}

{\bf Remark:--} In general, for fixed $t$, $\alpha^t$ is not an
automorphism of $\overline{\Ao(\delta)}^{\sigma_s}$. This is
essentially due to the fact that the multiplication is not
continuous with respect to the topology $\sigma_s$.

\section{The case of proper CQ*-algebras}
{ As discussed in the Introduction, a standard assumption in the
algebraic approach to quantum systems is that the *-algebra $\Ao$
of local observables is a C*-algebra. For this reason, in this
Section, we will specialize our discussion to a particular class
of quasi*-algebras, named {\it proper CQ*-algebras}, that arise
when completing a C*-algebra $\Ao$ with respect to a weaker norm.
More precisely, a proper CQ*-algebra $(\A,\Ao)$ is constructed in
the following way: assume that $\Ao[\| \ \|_0]$ is a C*-algebra
and let $\| \ \|$ be another norm on $\Ao$ satisfying the
following two conditions:
\begin{itemize}
\item[(i)] $\|x^*\|= \|x\|, \quad , \forall x \in \Ao$
\item[(ii)]$\|xy\|\leq \|x\|_0\|y\|, \quad \forall x,y \in \Ao$.
\end{itemize}
Let $\A$ be the $\| \ \|$-completion of $\Ao$. The quasi *-algebra
$(\A,\Ao)$ is then a proper CQ*-algebra. For details we refer to
\cite{book}. We remark here that the construction outlined above
does not yield the most general type of proper CQ*-algebra, but it
produces the right object needed in our discussion.

The advantage of considering proper CQ*-algebras relies on the
fact that this makes easier to use some known results which hold
for bounded operators. This is convenient mainly because, as we
have seen in the previous section, the fact that the implementing
operator $H^{(\pi)}$ belongs to $\LL(\D_\pi,\D_\pi')$ makes it
impossible, in general, to consider powers of $H^{(\pi)}$. For
this reason we have proposed in Section 3 a different strategy,
which may appear rather peculiar. In this section we show that
some standard result can be used easily if we add an extra
assumption to the sesquilinear forms which produce (and are
produced by) the *-representations of CQ*-algebras we are going to
work with.

For proper CQ*-algebras Theorem 4.1 of \cite{bit} gives}
\begin{thm}
Let $(\A,\Ao)$ be a proper CQ*-algebra with unit and $\delta$ be a
*-derivation on $\Ao$. Then the following statements are
equivalent:

(i) There exists a positive sesquilinear form $\varphi$ on
$\A\times\A$ such that:

$\varphi$ is invariant, i.e.
\begin{equation}
\varphi(ax,y)=\varphi(x,a^*y), \mbox{ for all } a\in \A \mbox{ and
} x,y\in\Ao; \label{61}
\end{equation}

$\varphi$ is $\|\,.\|$-continuous, i.e.
\begin{equation}
|\varphi(a,b)|\leq \|a\| \|b\|, \mbox{ for all } a,b\in \A,
\label{62}
\end{equation}

$\varphi$ satisfies the following inequalities:
\begin{equation}
|\varphi(\delta(x),\1)|\leq
C(\sqrt{\varphi(x,x)}+\sqrt{\varphi(x^*,x^*)}), \quad \forall x\in
\Ao, \label{63}
\end{equation}
for some positive constant $C$, and $\forall a\in\A$ there exists
some positive constant $\gamma_a^2$ such that
\begin{equation}
|\varphi(ax,ax)|\leq \gamma_a^2\varphi(x,x), \quad \forall x\in
\Ao. \label{64}
\end{equation}

(ii) There exists a $(\|\,\|-\tau_s)$-continuous, ultra-cyclic and
bounded *-representation $\pi$ of $\A$, with ultra-cyclic vector
$\xi_0$, such that the *-derivation $\delta_\pi$ induced by  $\pi$
is s-spatial, i.e. there exists  a symmetric operator $\hat H$ on
the Hilbert space of the representation $\Hil_\pi$ such that
\be\left\{
\begin{array}{ll}
D(\hat H)=\pi(\Ao)\xi_0 \\
\delta_\pi(\pi(x))\Psi=i[\hat H,\pi(x)]\Psi,\quad \forall x\in\Ao,
\forall\Psi\in D(\hat H). \end{array} \right. \label{65} \en
\label{theorem61}
\end{thm}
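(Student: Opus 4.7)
The strategy is to reduce Theorem \ref{theorem61} to Theorem 4.1 of \cite{bit} (with $\tau=\|\cdot\|$) and to exploit the extra hypothesis (\ref{64}), which is peculiar to the CQ*-setting, to upgrade the partial-*-algebra spatiality supplied by \cite{bit} to the stronger s-spatiality asserted in (\ref{65}).

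For (i) $\Rightarrow$ (ii), I would first run the GNS-type construction based on $\varphi$: let $N=\{a\in\A:\varphi(a,a)=0\}$, take $\Hil_\pi$ to be the completion of $\A/N$ with inner product inherited from $\varphi$, set $\xi_0:=\lambda(\1)$ where $\lambda:\A\to\A/N$ is the canonical projection, and define $\pi(a)\lambda(x):=\lambda(ax)$ for $a\in\A$, $x\in\Ao$, so that $\D_\pi=\pi(\Ao)\xi_0$. Conditions (\ref{61})--(\ref{62}) give that $\pi$ is a well-defined $(\|\cdot\|-\tau_s)$-continuous *-representation, and (\ref{63}) is exactly the input needed in \cite[Thm.~4.1]{bit} to produce a symmetric $H\in\LDD$ with $H\xi_0\in\Hil_\pi$ implementing $\delta_\pi$ in the partial-*-algebra sense. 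The crucial upgrade now uses (\ref{64}):
\[
\|\pi(a)\lambda(x)\|^2 \;=\; \varphi(ax,ax) \;\leq\; \gamma_a^2\,\varphi(x,x) \;=\; \gamma_a^2\,\|\lambda(x)\|^2,
\]
so every $\pi(a)$ extends to a bounded operator on $\Hil_\pi$, making $\pi$ a bounded *-representation. Because all $\pi(a)$ are bounded, $\widehat{\pi(x)}$ collapses to $\pi(x)$ itself; evaluating the spatiality identity on $\xi_0$ gives
\[
H\pi(x)\xi_0 \;=\; -i\,\pi(\delta(x))\xi_0 + \pi(x)H\xi_0 \;\in\; \Hil_\pi,
\]
so $H$ maps $\pi(\Ao)\xi_0$ into $\Hil_\pi$. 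Setting $\hat H:=H\restr_{\pi(\Ao)\xi_0}$, the identity $H=H^\dagger$ restricts to ordinary symmetry of $\hat H$, and (\ref{65}) follows.

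For (ii) $\Rightarrow$ (i), I would define $\varphi(a,b):=\langle\pi(a)\xi_0,\pi(b)\xi_0\rangle$; this is automatically a positive sesquilinear form on $\A\times\A$. Invariance (\ref{61}) is immediate from $\pi(a)^*=\pi(a^*)$ (valid because $\pi(a)$ is bounded). Continuity (\ref{62}) follows from the $(\|\cdot\|-\tau_s)$-continuity of $\pi$ evaluated at $\xi_0$, after rescaling $\xi_0$ so that the implicit continuity constant becomes $1$. Condition (\ref{64}) is immediate with $\gamma_a=\|\pi(a)\|$. For (\ref{63}) I would use (\ref{65}) and the symmetry of $\hat H$ to write
\[
\varphi(\delta(x),\1) \;=\; i\langle\pi(x)\xi_0,\hat H\xi_0\rangle \;-\; i\,\overline{\langle\pi(x^*)\xi_0,\hat H\xi_0\rangle},
\]
which by Cauchy--Schwarz gives (\ref{63}) with $C=\|\hat H\xi_0\|$.

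The main obstacle is the passage from the abstract operator $H\in\LDD$ given by \cite{bit} to an honest densely-defined symmetric operator $\hat H$ on $\Hil_\pi$: one has to verify that the boundedness of all $\pi(a)$ genuinely forces $H$ to preserve $\Hil_\pi$ when restricted to $\pi(\Ao)\xi_0$, and that the ``$\circ$''-products in the partial-*-algebra spatiality formula collapse to ordinary operator products. Once this collapse is made rigorous, the rest of the argument is routine GNS-type bookkeeping.
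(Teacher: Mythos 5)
Your proposal is correct and follows essentially the route the paper itself indicates: the paper omits the proof, remarking only that it parallels Theorem~4.1 of \cite{bit} and that condition (\ref{64}) makes the GNS representation $\pi_\varphi$ bounded, which is what permits the upgrade to an honest symmetric operator $\hat H$ on $\pi(\Ao)\xi_0$ (via Bratteli--Robinson, Prop.~3.2.28). Your write-up makes exactly this explicit --- including the key observation that boundedness of every $\pi(a)$ collapses the $\circ$-products and forces $H\pi(x)\xi_0\in\Hil_\pi$ --- so it matches the intended argument.
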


The proof of this theorem is not significantly different from that
given in \cite{bit} and will be omitted here. It is worth
remarking that condition (\ref{64}) implies that the
representation $\pi_\varphi$, constructed starting from $\varphi$
as in \cite{bit}, is bounded, i.e. $\pi_\varphi(a)\in
B(\Hil_\varphi)$ for all elements $a\in\A$. However, since $\A$ is
not an algebra, $ab$ is not defined for general $a,b\in\A$, while
$\pi_\varphi(a) \pi_\varphi(b)$ turns out to be a bounded
operator. Therefore it has no meaning asking whether
$\pi_\varphi(ab)=\pi_\varphi(a) \pi_\varphi(b)$, since the lhs is
not well defined, unless $a$ and/or $b$ belongs to $\Ao$. In fact
in this case we can check that \be \pi_\varphi(ax)=\pi_\varphi(a)
\pi_\varphi(x),\quad \forall a\in\A,\,\forall x\in\Ao
\label{66}\en We also want to remark that the proof of the
implication $(i)\Rightarrow (ii)$ is mainly based on condition
(\ref{64}) which makes it possible to use the well known result
stated, for instance, in \cite{brarob}, Proposition 3.2.28.
Another remark which may be of some help in concrete applications
is the following: suppose that our sesquilinear form satisfies the
following modified version of (\ref{64}): $|\varphi(yx,yx)|\leq
\gamma^2\varphi(x,x)$, $\forall x,y\in \Ao$, with $\gamma$
independent of $y$. In this case, due to the norm-continuity of
$\varphi$, condition (\ref{64}) easily follows.

\medskip It is very easy to construct examples of positive
sesquilinear forms on  $\A\times\A$ satisfying (\ref{61}),
(\ref{62}) and (\ref{63}) above; in fact, the results in
\cite{btellepi} suggest to define, on the abelian proper
CQ*-algebra $(L^p(X,\mu), C(X))$ where $X$ a compact interval of
the real line, $\mu$ the Lebesgue measure and $p\geq 2$, a
sesquilinear form as, e.g.,
$\varphi(f,g):=\int_Xf(x)\overline{g(x)}\Psi(x)d\mu$, where we
take here $\Psi(x)=Ne^{\gamma x}$, and $N, \gamma>0$ are such that
$\|\Psi\|_{\frac{p}{p-2}}\leq 1$ (we put $\frac{p}{p-2}=\infty$ if
$p=2$). More difficult is to find examples of sesquilinear forms
satisfying also condition (\ref{64}). We refer to \cite{tra} for a
general analysis on this (and the other) requirements, while we
construct here an example in the context of Hilbert algebras,
which are relevant, e.g., in the Tomita-Takesaki theory.

Let $\A$ be an achieved Hilbert algebra with identity  and
$\Hil_\A$ the Hilbert space obtained by the completion of $\A$,
\cite{stsz,take}. For any $a\in\A$ we put $L_a^ob=ab$,
$R_a^ob=ba$, $b\in\A$. Then $L_a^o$ and $R_a^o$ can be extended to
bounded linear operators $L_a$ and $R_a$ on $\Hil_\A$,
respectively. The sets $L_\A$ and $R_\A$ are von Neumann algebras
on $\Hil_\A$, and $JL_aJ=R_{a^*}$ for all $a\in\A$, so that $JL_\A
J=R_\A$. Here $J$ is the isometric involution on $\Hil_\A$ which
extends the involution * of $\A$. Furthermore, for any
$x\in\Hil_\A$, we define two operators on $\A$ as $L_xa=R_ax$ and
$R_xa=L_ax$, for $a\in\A$ (we use the same symbol $L$ and $R$
since no confusion can arise). It is known, \cite{pallu}, that
$L_x$ and $R_x$ are closable operators and
$L_x^*=\overline{L}_{Jx}$, $LR_x^*=\overline{R}_{Jx}$, for all
$x\in\Hil_\A$.

It is also known that the Hilbert space $\Hil_\A$ over the
C*-algebra $\A$ with the norm $\|x\|_\flat=\|R_x\|$ ($\|\,\|$ is
the operator norm) and with the involution $J=*$ is a proper
CQ*-algebra, \cite{bithcq}. Here we consider a family
$\{\A_\lambda\}_{\lambda\in\Lambda}$ of Hilbert algebras. The
direct sum $\bigoplus_{\lambda\in\Lambda}\Hil_{\A_\lambda}$ of the
Hilbert spaces $\Hil_{\A_\lambda}$ is a proper CQ*-algebra under
the usual operations. Now we assume that one $\A_{\lambda_0}$ is a
H*-algebra, that is $\A_{\lambda_0}=\Hil_{\A_{\lambda_0}}$. Then
we consider a *-derivation $\delta$ of
$\bigoplus_{\lambda\in\Lambda}\A_\lambda$ satisfying $\delta
P_\lambda=P_\lambda\delta$ for all $\lambda\in\Lambda$, that is,
$\delta: \A_\lambda\rightarrow \Hil_{\A_{\lambda}}$, $\forall
\lambda\in\Lambda$, where $P_\lambda$ is the projection of
$\bigoplus_{\lambda\in\Lambda}\Hil_{\A_\lambda}$ onto
$\Hil_{\A_\lambda}$. If we finally define
$$\varphi_{P_\lambda}((x_\lambda),(y_\lambda))\equiv
<x_{\lambda_0},y_{\lambda_0}>, \quad \forall (x_\lambda),
(y_\lambda)\in\bigoplus_{\lambda\in\Lambda}\Hil_{\A_\lambda},$$
then $\varphi_{P_\lambda}$ satisfies all conditions required in
Theorem \ref{theorem61}.

\smallskip Once the symmetric operator $\hat H$ has been defined by
means of this Theorem, it is clear that if $\hat H$ is also
self-adjoint, then $e^{i\hat Ht}$ exists as a unitary operator in
$B(\Hil_\pi)$ and $\hat H$ is the generator of a one-parameter
group of unitary operators on $\Hil_\pi$.

\medskip

Let us now assume  that the representation of the proper
CQ*-algebra, $\pi$, satisfies all the requirement of (ii), Theorem
\ref{theorem61}, so to have an implementing , $\hat H$, operator
for $\delta_\pi$. Then we define the following set: \be
\Bo^\pi:=\{\pi(x)\in\pi(\Ao):\quad[\hat H,\pi(x)]_k\in\Ao,\quad
\forall k\in\mathbb{N}_0\},\label{extra8}\en which surely contains
all the elements $\lambda\pi(\1)$, $\lambda\in\mathbb{C}$. Also,
if $\hat H$ belongs to $\pi(\Ao)$, then $\Bo^\pi=\pi(\Ao)$. We
want to show now that $\Bo^\pi=\Ao(\delta)^\pi$. Using a simple
extension argument, we can first check that for all
$x\in\Ao(\delta)$  for which $\pi(x)\in\Bo^\pi$ we have \be
\delta_\pi(\pi(x))=\pi(\delta(x))=i[\hat H,\pi(x)].
\label{extra9}\en Also, it is evident that if $\pi(x)\in\Bo^\pi$,
then $[\hat H,\pi(x)]\in\Bo^\pi$. With this in mind we can now
prove the following

\begin{lemma}
$x\in\Ao(\delta)$ if and only if $\pi(x)\in\Bo^\pi$. For any such
element we have \be \delta_\pi^k(\pi(x))=i^k[\hat H,\pi(x)]_k,
\quad \forall k\in\mathbb{N}_0.\label{extra10}\en
\end{lemma}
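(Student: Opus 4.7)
My plan is to prove both implications together by induction on $k \in \mathbb{N}_0$, using the faithfulness of $\pi$ and the identity $\delta_\pi(\pi(y)) = i[\hat H,\pi(y)]$ on $D(\hat H)$, which is available for every $y\in\Ao$ by clause (ii) of Theorem \ref{theorem61} (equation (\ref{65})). Note that $D(\hat H)=\pi(\Ao)\xi_0$ is invariant under $\pi(y)$ for any $y\in\Ao$, so iterated commutators $[\hat H,\pi(y)]_k$ are well defined as operators on $D(\hat H)$.

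I would carry out the forward direction first. Assume $x\in\Ao(\delta)$, so $\delta^j(x)\in\Ao$ for every $j$. Induction on $k$: the case $k=0$ is immediate. For the inductive step, supposing $\delta_\pi^{k}(\pi(x))=i^{k}[\hat H,\pi(x)]_k$ and $\delta^{k}(x)\in\Ao$, one has $\delta_\pi^{k}(\pi(x))=\pi(\delta^{k}(x))\in\pi(\Ao)$, so we may apply $\delta_\pi$ once more and use (\ref{65}) on the element $\pi(\delta^{k}(x))$:
\[
\delta_\pi^{k+1}(\pi(x)) = \delta_\pi(\pi(\delta^{k}(x))) = i[\hat H,\pi(\delta^{k}(x))] = i\bigl[\hat H,\,i^{k}[\hat H,\pi(x)]_k\bigr] = i^{k+1}[\hat H,\pi(x)]_{k+1}
\]
on $D(\hat H)$. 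In particular, $i^{k+1}[\hat H,\pi(x)]_{k+1}=\pi(\delta^{k+1}(x))\in\pi(\Ao)$, so $\pi(x)\in\Bo^\pi$ and formula (\ref{extra10}) holds.

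For the converse, assume $\pi(x)\in\Bo^\pi$, i.e.\ $[\hat H,\pi(x)]_k\in\pi(\Ao)$ for every $k$. I proceed again by induction on $k$ to show $\delta^{k}(x)\in\Ao$. The base $k=0$ is trivial. For the step, assume $\delta^{j}(x)\in\Ao$ for all $j\le k$; then, as computed above, $\pi(\delta^{k+1}(x))=\delta_\pi^{k+1}(\pi(x))=i^{k+1}[\hat H,\pi(x)]_{k+1}$, and the right-hand side lies in $\pi(\Ao)$ by hypothesis. Faithfulness of $\pi$ then yields $\delta^{k+1}(x)\in\Ao$, closing the induction and giving $x\in\Ao(\delta)$.

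The only real subtlety is bookkeeping of domains: one has to justify that the iterated commutator $[\hat H,\pi(x)]_{k+1}$ really equals $[\hat H,\pi(\delta^{k}(x))]$ as an operator on $D(\hat H)$, which is clear because by the inductive identity both sides agree on $D(\hat H)$ and $D(\hat H)$ is invariant under $\pi(\Ao)$. Apart from this, the argument is a direct induction combining (\ref{65}), the intertwining property $\pi\circ\delta = \delta_\pi\circ\pi$ (valid on $\Ao$ by construction of $\delta_\pi$), and the faithfulness of $\pi$.
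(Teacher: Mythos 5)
Your proof is correct and follows essentially the same route as the paper's: an iteration/induction on $k$ that alternately uses the spatiality relation \eqref{65} on $\pi(\delta^k(x))$, the intertwining $\pi\circ\delta=\delta_\pi\circ\pi$, and boundedness of the operators involved to identify $i^{k+1}[\hat H,\pi(x)]_{k+1}$ with $\pi(\delta^{k+1}(x))$, with the converse handled symmetrically. You are in fact slightly more explicit than the paper in isolating faithfulness of $\pi$ as the ingredient that pulls $\pi(\delta^{k+1}(x))\in\pi(\Ao)$ back to $\delta^{k+1}(x)\in\Ao$ in the reverse implication, which the paper dispatches with ``in a similar way.''
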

\begin{proof}
Let us first take $x\in\Ao(\delta)$. Then (a)
$x\in\Ao\Rightarrow\pi(x)\in\pi(\Ao)$ and (b)
$\delta(x)\in\Ao\Rightarrow
\delta_\pi(\pi(x))=\pi(\delta(x))\in\pi(\Ao)$. Since $\pi$ is
bounded, this implies that $\delta_\pi(\pi(x))=i[\hat H,\pi(x)]$
and, as a consequence, that $[\hat H,\pi(x)]\in\pi(\Ao)$. The same
argument, applied  to $\delta(x)$ which, as we know, is still in
$\Ao(\delta)$, produces $\delta_\pi(\pi(\delta(x)))=i[\hat
H,\pi(\delta(x))]$ which, after few computations, produces
$\delta_\pi^2(\pi(x))=i^2[\hat H,\pi(x)]_2$ which still belongs to
$\pi(\Ao)$ since
$\delta_\pi^2(\pi(x))=\pi(\delta^2(x))\in\pi(\Ao)$. Iterating this
procedure we find that $\delta_\pi^k(\pi(x))=i^k[\hat H,\pi(x)]_k$
and $[\hat H,\pi(x)]_k\in\pi(\Ao)$ for all $k\in\mathbb{N}_0$.
Therefore $\pi(x)\in\Bo^\pi$.

In a similar way we can also prove that if $\pi(x)\in\Bo^\pi$ then
$x\in\Ao(\delta)$ and $\delta_\pi^k(\pi(x))=i^k[\hat H,\pi(x)]_k$
for all $k\in\mathbb{N}_0$.
\end{proof}

\vspace{3mm}

{\bf Remark:--} It may be worth recalling that  Lemma 4.2 also
implies that $\Ao(\delta)^\pi=\Bo^\pi$.

\vspace{3mm}

It is now straightforward to use this Lemma and Theorem
\ref{theoremextra1} to find conditions under which the sequence
$\alpha_N^t(x)=\sum_{k=0}^N\frac{t^k}{k!}\delta^k(x)$ is
$\sigma_s$-convergent and defines the time evolution of $x$,
$\alpha^t(x)$, for $x\in\Ao(\delta)$. Each one of the following
conditions can be used to deduce the existence of $\alpha^t(x)$:

{\bf Condition 1:--} $\forall \eta\in\D_\pi$ there exists
$c_\eta>0$ such that
$$
p_\eta([\hat H,\pi(x)])\leq c_\eta p_\eta(\pi(x))=c_\eta
q_\eta(x), \quad \forall x\in\Ao(\delta).
$$
In this case $\alpha^t(x)$ exists for all values of $t$.

\vspace{2mm}

{\bf Condition 2:--} $\exists c>0:\, \forall \eta_1\in\D_\pi$
there exist $A_{\eta_1}$, $n\in\mathbb{N}$ and $\eta_2\in\D_\pi$
such that
$$
p_{\eta_1}([\hat H,\pi(x)]_k)\leq A_{\eta_1}c^k k!k^n
q_{\eta_2}(x), \quad \forall x\in\Ao(\delta), \, \forall
k\in\mathbb{N}.
$$
In this case $\alpha^t(x)$ exists for all values of
$t<\frac{1}{c}$.

As we have already shown before, $\alpha^t$ can be extended to
$\overline{\Ao(\delta)}^{\sigma_s}$ and is a semigroup.

\vspace{3mm}

{ We adapt now Proposition 4.3 of \cite{bit} to the present
setting. For that, we again consider a family of *-derivations of
$\Ao$ and a single representation $\pi$ with the properties
required in (ii) of Theorem \ref{theorem61}: we remind that this
is the most common situation in physical applications. }

\begin{prop}

Let $\{\Sys_L, L\in\Lambda\}$ be a c-representable family such
that the corresponding c-representation $\pi$ is bounded . Also,
suppose that the following conditions hold:

(1) $\delta_n(x)$ is $\|\,\|$-Cauchy for all $x\in\Ao$;

(2) for all $n\in\mb{N}$ the induced *-derivation
$\delta_\pi^{(n)}$ is s-spatial, i.e. a symmetric operator $\hat
H_n$ on $\Hil_\pi$ exists such that \be\left\{
\begin{array}{ll}
D(\hat H_n)=\pi(\Ao)\xi_0 \\
\delta_\pi^{(n)}(\pi(x))\Psi=i[\hat H_n,\pi(x)]\Psi,\quad \forall
x\in\Ao, \forall\Psi\in D(\hat H_n). \end{array} \right.
\label{67} \en

(3) $\sup_n\|\hat H_n\xi_0\|=L<\infty$.

Then we have that

\begin{itemize}
\item[(a)] $\delta(x)=\|\,\|-\lim_n\delta_n(x)$ exists in $\A$ and
is a *-derivation of $\Ao$; \item[(b)] $\delta_\pi$, the
*-derivation induced by $\pi$, is well defined and s-spatial:
there exists a symmetric operator $\hat H$ on $\Hil_\pi$ such that
\be\left\{
\begin{array}{ll}
D(\hat H)=\pi(\Ao)\xi_0 \\
\delta_\pi(\pi(x))\Psi=i[\hat H,\pi(x)]\Psi,\quad \forall x\in\Ao,
\forall\Psi\in D(\hat H); \end{array} \right. \label{68} \en
\item[(c)] if $<\hat H_n\xi_0,\pi(y)\xi_0>\,\rightarrow\,<\hat
H\xi_0,\pi(y)\xi_0>$ for all $y\in\Ao$, then $<\hat
H_n\pi(x)\xi_0,\pi(y)\xi_0>\,\rightarrow\,<\hat
H\pi(x)\xi_0,\pi(y)\xi_0>$ for all $x,y\in\Ao$;\item[(d)] if
$\|(\hat H_n-\hat H)\xi_0\|\rightarrow 0$ for all $y\in\Ao$, then
$\|(\hat H_n-\hat H)\pi(x)\xi_0\|\rightarrow 0$ for all $x\in\Ao$.

\end{itemize}
\label{theorem63}
\end{prop}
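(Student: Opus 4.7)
The statement parallels Proposition 4.3 of \cite{bit}, adapted to the bounded representation of a proper CQ*-algebra, so my plan is to follow the same strategy but replace arguments using the rigged-Hilbert-space triple with direct bounded-operator manipulations. Part (a) is essentially free: by (1) and completeness of $\A$ with respect to $\|\cdot\|$, the limit $\delta(x):=\|\cdot\|-\lim_n\delta_n(x)$ exists in $\A$ for every $x\in\Ao$. Linearity, compatibility with the involution, and the Leibniz rule transfer to $\delta$ from each $\delta_n$ because the involution is $\|\cdot\|$-continuous and the $\Ao$-multiplications in $\A$ are separately continuous. Moreover $\delta_\pi$ is well-defined on $\pi(\Ao)$: each $\pi(\delta_n(x))$ vanishes when $\pi(x)=0$ by c-representability, and $\pi$ is $(\|\cdot\|-\tau_s)$-continuous, so this property persists in the limit.

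For (b), the starting point is the identity obtained by applying (2) at $\xi_0$:
\[
\hat H_n\pi(y)\xi_0 \;=\; -i\pi(\delta_n(y))\xi_0 \,+\, \pi(y)\hat H_n\xi_0, \qquad y\in\Ao.
\]
Condition (3) implies that $\{\hat H_n\xi_0\}$ is bounded in $\Hil_\pi$, so by weak compactness of balls I can extract a subsequence with $\hat H_{n_k}\xi_0\rightharpoonup\eta_0$. I then define $\hat H$ on $D(\hat H):=\pi(\Ao)\xi_0$ by
\[
\hat H\pi(y)\xi_0 \;:=\; -i\pi(\delta(y))\xi_0 \,+\, \pi(y)\eta_0.
\]
The step I expect to be most delicate is verifying that this is well-defined: if $\pi(z)\xi_0=0$, the $\hat H_n$-identity (together with $\hat H_n\pi(z)\xi_0=0$, since $\hat H_n$ is a genuine operator on $\pi(\Ao)\xi_0$) forces $\pi(z)\hat H_n\xi_0=i\pi(\delta_n(z))\xi_0$; passing to the weak limit along $n_k$ (using that $\pi(z)$ is bounded and hence weakly continuous, while $\pi(\delta_n(z))\xi_0\to\pi(\delta(z))\xi_0$ in norm by (a)) yields $\pi(z)\eta_0=i\pi(\delta(z))\xi_0$, which is exactly what is needed to make $\hat H\pi(z)\xi_0=0$. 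Symmetry of $\hat H$ is then obtained by taking the limit in $<\hat H_n\pi(x)\xi_0,\pi(y)\xi_0>\,=\,<\pi(x)\xi_0,\hat H_n\pi(y)\xi_0>$, after expanding each side with the key identity so that the $\hat H_n$-dependence reduces to pairings of the form $<\hat H_n\xi_0,\pi(x^*y)\xi_0>$ and its conjugate, both of which converge along the chosen subsequence. The commutation relation $\delta_\pi(\pi(x))\Psi=i[\hat H,\pi(x)]\Psi$ on $\Psi=\pi(y)\xi_0$ reduces, via the formula for $\hat H$ and the multiplicativity $\pi(xy)=\pi(x)\pi(y)$ for $x,y\in\Ao$, to the Leibniz identity $\delta(xy)=\delta(x)y+x\delta(y)$.

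Parts (c) and (d) are direct consequences of the same key identity. Subtracting the analogous formula for $\hat H$ gives
\[
(\hat H_n-\hat H)\pi(x)\xi_0 \;=\; -i\pi(\delta_n(x)-\delta(x))\xi_0 \,+\, \pi(x)(\hat H_n-\hat H)\xi_0.
\]
For (d), norm convergence of the first term follows from (a) combined with $\|\cdot\|$-continuity of $\pi$, and the second term tends to $0$ in norm by boundedness of $\pi(x)$ applied to the hypothesis $\|(\hat H_n-\hat H)\xi_0\|\to 0$. For (c), pairing the above identity with $\pi(y)\xi_0$ and rewriting $<\pi(x)(\hat H_n-\hat H)\xi_0,\pi(y)\xi_0>\,=\,<(\hat H_n-\hat H)\xi_0,\pi(x^*y)\xi_0>$ reduces the claim directly to the hypothesis of (c) applied with $y$ replaced by $x^*y\in\Ao$.
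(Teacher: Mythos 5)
Your argument is correct and follows essentially the same route as the paper: the paper proves only part (d) explicitly, using exactly your key identity $\hat H_n\pi(x)\xi_0=\tfrac{1}{i}\delta_\pi^{(n)}(\pi(x))\xi_0+\pi(x)\hat H_n\xi_0$, and defers (a)--(c) to Proposition 4.3 of \cite{bit}, whose weak-compactness construction of $\hat H$ from the bounded sequence $\{\hat H_n\xi_0\}$ is precisely what you reconstruct. Your version of (d) is in fact slightly more careful than the printed one, which omits the factor $\|\pi(x)\|$ in the final estimate.
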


\begin{proof}
The first three statements can be proven in quite the same way as
in \cite{bit}.

The proof of the statement $(d)$ is a consequence of the
definition of the implementing operator of a s-spatial derivation
as it can be deduced by Proposition 3.2.8 of \cite{brarob}. We
have
$$\left\{
\begin{array}{ll}
\hat H_n\pi(x)\xi_0=\frac{1}{i}\delta_\pi^{(n)}(\pi(x))\xi_0+\pi(x)\hat H_n\xi_0\:\mbox{ and } \\
\hat H\pi(x)\xi_0=\frac{1}{i}\delta_\pi(\pi(x))\xi_0+\pi(x)\hat
H\xi_0, \, \forall x\in\Ao \end{array} \right.$$ Therefore
$$\|(\hat H_n-\hat H)\pi(x)\xi_0\|\leq
\|(\delta_\pi^{(n)}(\pi(x))-\delta_\pi(\pi(x)))\xi_0\| + \|(\hat
H_n-\hat H)\xi_0\|\rightarrow 0$$ because of the assumptions on
$\hat H_n$ and $\pi$.

\end{proof}

\section*{Concluding remarks} As we have discussed in the Introduction, in this paper we
have chosen to regularize only the algebra related to a physical
system, leaving unchanged the set of states. However, in some
approaches discussed in the literature, see \cite{sew} for
instance, a {\it cutoff} is introduced for both the states and the
algebra. If we consider for a moment this point of view here, we
wonder what can be said if we have a family of positive
sesquilinear forms $\varphi_n$ on $\Ao\times\Ao$ instead of a
single one on $\A\times\A$. The simplest situation, which is the
only one we will consider here, is when the family
$\{\varphi_n,\,n\in\mathbb{N}\}$ satisfies the following
requirements:
\begin{itemize}
\item $\varphi_n(xy,z)=\varphi_n(y,x^*z)\quad \forall
x,y,z\in\Ao,\,\forall n\in\mb{N};$ \item $|\varphi_n(x,y)|\leq
\|x\|\|y\|\quad \forall x,y\in\Ao,\,\forall n\in\mb{N}$; \item the
sequence $\{\varphi_n(x,y)\}_{n\in\mb{N}}$ is Cauchy $\forall
x,y\in\Ao$.
\end{itemize}
The second condition allows us to extend each $\varphi_n$ to the
whole  $\A\times\A$. This extension, $\tilde\varphi_n$, satisfies
the same conditions as above. Furthermore, since
$\{\tilde\varphi_n(a,b)\}_{n}$ is a Cauchy sequence $\forall
a,b\in\A$, we can also define a new positive sesquilinear form
$\Phi$ on $\A\times\A$:
$$
\Phi(a,b)=\lim_{n}\tilde\varphi_n(a,b), \quad \forall a,b\in\A.
$$
It is clear that also $\Phi$ is *-invariant and
$\|\,\|$-continuous. If we also have that
\begin{itemize}
\item $|\tilde\varphi_n(\delta(x),\1)|\leq
C\sqrt{\varphi_n(x,x)+\varphi_n(x^*,x^*)}\quad \forall
x\in\Ao,\,\forall n\in\mb{N}$ and for some positive $C$; \item
$\forall a\in\A \,\; \exists\, \gamma_a>0:
|\tilde\varphi_n(ax,ax)|\leq \gamma_a^2\varphi_n(x,x) \quad\forall
x\in\Ao,\, \forall n\in\mb{N}$,
\end{itemize}
then we easily extend these properties to $\Phi$ so that we get a
positive sesquilinear form on $\A\times\A$ satisfying all the
requirements of point (i), Theorem \ref{theorem61}. Therefore we
have two different possibilities, at least if we are dealing with
a single derivation $\delta$:

{\em First possibility}: we use each $\tilde\varphi_n$ to
construct, using Theorem \ref{theorem61}, a *-representation
$\pi_n$ and an induced derivation
$\delta_{\pi_n}(\pi_n(x))=\pi_n(\delta(x))$, $x\in\Ao$, which
turns out to be s-spatial. Therefore we find a sequence of
symmetric operators $\hat H_n$ acting in possibly different
Hilbert spaces $\Hil_n$.

{\em Second possibility}: we use  $\Phi$ to construct, using again
Theorem \ref{theorem61}, a single *-representation $\pi$ and an
induced derivation $\delta_{\pi}(\pi(x))=\pi(\delta(x))$,
$x\in\Ao$, which is s-spatial. Therefore we get a symmetric
operator $\hat H$ acting on the Hilbert space of the
representation $\Hil$.

Both of these possibilities have a certain interest. We will
analyze in a forthcoming paper the details of these constructions
and the relations between $\hat H$ and $\hat H_n$.

\vspace{6mm} \noindent{\large \bf Acknowledgments} \vspace{5mm}

We acknowledge the financial support  of the Universit\`a degli
Studi di Palermo (Ufficio Relazioni Internazionali) and of the
Italian Ministry of Scientific Research. FB and CT wish to thank
all people at the Department of Applied Mathematics of  Fukuoka
University for their warm hospitality.

\vspace{8mm}

\end{document}